\newtheorem{mydef}{Definition}
\newtheorem{mytheo}{Theorem}
\begin{document}

\title{An Approximate Marginal Spread Computation Approach for the Budgeted Influence Maximization with Delay%
\thanks{Major part of this work was done when the first author was a PhD student at IIT Kharagpur. This work is supported by the following two grants E-Business Center of Excellence and Post Doctoral Fellowship Grant by IIT Gandhinagar.}
}


\author{Suman Banerjee        \and
        Mamata Jenamani \and Dilip Kumar Pratihar
}


\institute{ Suman Banerjee \at
              Department of Computer Science and Engineering,\\ IIT Gandhinagar. \email{suman.b@iitgn.ac.in}           
           \and
           Mamata Jenamani \at
             Department of Industrial and Systems Engineering,\\ IIT Kharagpur. \email{mj@iem.iitkgp.ac.in}
             \and
           Dilip Kumar Pratihar\\
            Department of Mechanical Engineering,\\ IIT Kharagpur. \email{dkpra@mech.iitkgp.ac.in}           
}

\date{Received: date / Accepted: date}

\maketitle

\begin{abstract}
Given a \emph{social network} of users with \emph{selection cost} and a fixed \textit{budget}, the problem of \textit{Budgeted Influence Maximization} finds a subset of the nodes ( known as \emph{seed nodes}) for initial activation to maximize the influence, such that the total selection cost is within the allocated budget. Existing solution methodologies for this problem make two  assumptions, which are not applicable to real\mbox{-}life situations. First, an influenced node of the current time stamp can trigger only once in the next time stamp to its inactive neighbors and the other one is the diffusion process continues forever. To make the problem more practical, in this paper, we introduce the \emph{Budgeted Influence Maximization with Delay} by relaxing the single time triggering constraint and imposing an additional constraint for maximum allowable diffusion time. For this purpose, we consider a delay distribution for each edge of the network, and consider a node is influenced, if it is so, within the allowable diffusion time. We first propose an incremental greedy strategy for solving this problem, which works based on the approximate computation of marginal gain in influence spread. Next, we make two subsequent improvements of this algorithm in terms of efficiency by exploiting the \textit{sub\mbox{-}modularity} property of the time delayed influence function. We implement the proposed methodologies with three benchmark datasets. Reported results show that the seed set selected by the proposed methodologies can lead to more number of influenced nodes compared to that obtained by other baseline methods. We also observe that between the two improvised methodologies, the second one is more efficient for the larger datasets.
\keywords{Social Network \and Budgeted Influence Maximization \and Seed Set \and Selection Cost}
\end{abstract}

\section{Introduction} \label{intro}
Consider the situation, when a commercial house wants to promote a brand among the customers through the on\mbox{-}line social networks. The easiest way to do so, is by initially selecting a set of influential users and distributing them free samples. If they like it, then they will influence their neighbors towards purchasing the item. At least some of them will purchase, and influence their neighbors. This cascading process will  continue and ultimately, a large fraction of the users will try the item. This helps the E\mbox{-}commerce house to earn more revenue. Here, the underlying problem is locating a set of highly influential users for initial activation to maximize the influence in the given social network. This problem is popularly referred to as the \textit{Social Influence Maximization Problem} (\textit{SIM Problem}), in the literature \cite{kempe2003maximizing, rui2020neighbour}. Due to its potential applications in several domains, such as \textit{personalized recommendation} \cite{song2006personalized}, \textit{viral marketing} \cite{tang2017profit}, trust inferencing \cite{pal2019trust} etc. this problem remains an active area of research in \textit{Computational Social Network Analysis} domain, since the last one and half decades \cite{banerjee2017survey}. 
\par  In reality, the social networks are formed by some rational and self\mbox{-}interested human agents. Hence, if a user is selected to be initially active, then incentivization is required. In this scenario, the SIM Problem is not realistic, as it assumes uniform incentive demand (selection cost) for all the users. However, in practice, it may be different for different users. By relaxing this assumption, Nguyen and Zheng \cite{nguyen2013budgeted} introduced the \emph{Budgeted Influence Maximization (BIM)} problem . For a social network of users with non\mbox{-}uniform selection cost and a fixed budget, this problem asks for selecting a set of initial nodes within the budget that leads to the maximum number of influenced nodes. There exist some solution methodologies for this problem such as approximation algorithm \cite{nguyen2013budgeted}, balanced seed selection approach \cite{han2014balanced}, community\mbox{-}based method \cite{banerjee2019combim} etc. However, in all these studies, it is implicitly assumed that, (i) an influenced node at current time stamp can trigger its inactive neighbors only in the next time stamp and, (ii) information can propagate forever. However, in reality, for many campaigns, diffusion time plays a vital role \cite{quan2018repost}. Here, we quote a few examples:
\begin{itemize}
	\item Consider the situation of viral marketing of a seasonal product. As these items are useful during a certain period, influencing a user towards purchasing an item of this category beyond that period will not be beneficial. 
	\item Consider the political campaigns before \textit{Prime Ministerial} or \textit{Presidential} etc. elections of a democratic country. Different political parties do campaign to influence the population to earn a majority opinion in their favor. From the situation itself, it is clear that, if a person is influenced towards a political party for voting after the date of the election, it does not make any sense.
	\item Imagine the situation of the viral marketing done by the organizers of a pop concert. They distribute free (or discounted) tickets among a few people, so that they can trigger a massive campaign and this may lead to a houseful show. However, if a person is influenced after the date of the show, it does not help the organizers to make the event successful.
\end{itemize} 
These real\mbox{-}life situations motivate us to study the problem of budgeted influence maximization by considering the propagation delay, i.e., counting influenced nodes only within the maximum allowable diffusion time. 
\par The main reason behind the social influence is the information  diffusion. To understand this process, there are many diffusion models that have been proposed and studied \cite{guille2013information}. Among them the  \emph{Independent Cascade Model (IC Model)} is quite popular and extensively used in influence maximization literature \cite{tang2014influence}. One unrealistic assumption of this model is that, every active node at time $t$ will trigger just once to activate its inactive neighbour. However, in reality, an influenced user can influence his uninfluenced neighbor by incurring some delay as well. That means, these models do not take care of the delay that happens in the real\mbox{-}world diffusion process. To cope up with this gap, recently the \textit{Latency Aware Independent Cascade Model} has been proposed by Liu et al. \cite{liu2014influence}. 
\par In this paper, we study the BIM Problem with delay  under the latency aware IC Model \cite{liu2014influence}. We propose an incremental greedy approach, which works based on approximate computation of marginal gain and two subsequent improvements by exploiting the sub\mbox{-}modularity property of the time delayed influence function. Particularly, we make the following contributions in this direction.
\begin{itemize}
	\item This paper studies the Budgeted Influence Maximization Problem under the Latency Aware Independent Cascade Model with an additional maximum allowable diffusion delay constraint.
	\item  We propose an \textit{approximate marginal gain computation} approach in influence spread and use this concept for solving this problem.
	\item Exploiting the sub\mbox{-}modularity property of the time delayed influence function, we make two subsequent improvements in terms of efficiency of the proposed algorithm.
	\item We implement the proposed methodologies, with three real\mbox{-}life publicly available social network datasets and perform a set of experiments for different budget values.  
\end{itemize}  
\par Rest of the paper is organized as follows: Section \ref{Sec:RW} states some relevant literature in and around of our study. Section \ref{sec:BPD} reports some background material and describes the problem formally. Section \ref{Sec:PM} describes the proposed solution methodologies for the BIM Problem with Delay. Section \ref{Sec:EE} contains the experimental evaluation of the proposed methodologies and finally, Section \ref{Sec:CFD} concludes this study and gives future directions.

\section{Related Work} \label{Sec:RW}
The primary basis of our study is the problem of social influence maximization and more specifically, influence maximization with diffusion delay. Here, we report some existing studies in and around this problem from the literature.
\paragraph{Influence Maximization and its Variants} As mentioned , the problem of influence maximization is all about to select a set of influential initial adopters such that the diffusion starting with them results into a significant influence \cite{kempe2003maximizing}. Though, the problem was initiated by Domingos and Richardson \cite{domingos2001mining}, Kempe et al. \cite{kempe2003maximizing} were the fist to investigate the computational issues of this problem and proposed a $(1-\frac{1}{e})$ factor approximation algorithm. This seminal work triggers a massive interest and significant amount of research has been carried out since last two decades. Please look into \cite{banerjee2017survey} (and references therein) for a comprehensive survey. Different kinds of solution methodologies have been prosed such as approximation algorithms \cite{tang2014influence}, heuristics solutions \cite{goyal2011simpath}, soft computing\mbox{-}based approachs \cite{tang2020discrete} and so on. Due to many practical applications, this problem has also been studied in different variants such as influence spectrum problem, target set selection problem, multi\mbox{-}round influence maximizaion problem and so on \cite{banerjee2017survey}.

\paragraph{Influence Maximization with Diffusion Delay} Most of the existing studies on influence maximization do not take care of the delay that happens in real\mbox{-}life diffusion process. Recently, there are few works considering diffusion time as a factor \cite{chen2012time, mohammadi2015time, shi2016retrieving}. Liu et al. \cite{liu2014influence} studied the \textit{time constrained influence maximization} problem and proposed influence spreading path\mbox{-}based solution approach for this problem. Chen et al. \cite{chen2012time} investigated the SIM Problem by considering the influenced nodes within a given deadline under the \textit{time delayed IC Model}. They proposed two heuristic solutions for this problem, where the first one is based on the dynamic programming technique for computing the influence spread in \textit{trees}, and the other one fits the problem in the original IC Model and applies the first heuristic. Mohammadi et al.  \cite{mohammadi2015time} studied the time sensitive influence maximization problem under their proposed Delayed IC and LT Models. They modified the existing incremental greedy solution for this problem and also proposed two centrality measures. Li et al. \cite{li2017dominated} studied the \textit{Dominated Competitive Influence Maximization Problem} for dealing with multiple kinds of information together.
\par In this paper, we study the budgeted influence maximization problem by considering the delay in the diffusion and also, enforce the constraint that a node is influenced if it so with the maximum allowable diffusion time. To the best of the authors' knowledge, this is the first study on the budgeted influence maximization problem, which considers both the diffusion delay as well as maximum diffusion time. In the next section, we state the required background knowledge and define our problem formally.

\section{Background and Problem Definition} \label{sec:BPD}
 The social network is given as a \emph{directed}, \emph{vertex and edge\mbox{-}weighted graph} $\mathcal{G}(V, E, \mathcal{P}, \theta)$, where $V(\mathcal{G})=\{u_1, u_2, \dots, u_n\}$ are the set of users, $E(\mathcal{G})=\{e_1, e_2, \dots, e_m\}$ are the relations among the users, and $E(\mathcal{G}) \subseteq V(\mathcal{G}) \times V(\mathcal{G})$. $\mathcal{P}$ and $\theta$ are the \textit{edge} and \textit{vertex weight functions}, which maps each edge and vertex to a number in between $0$ and $1$, i.e., $\mathcal{P}: E(\mathcal{G}) \longrightarrow (0,1]$ and $ \theta: V(\mathcal{G}) \longrightarrow [0,1]$, respectively. In the context of information diffusion, the weight of the edge $(u_iu_j)$ is considered as the \textit{diffusion probability from the user $u_i$ to $u_j$} and denoted by $\mathcal{P}_{u_i \rightarrow u_j}$. The vertex weight is treated as the \textit{diffusion threshold} (a measurement of how difficult to influence a user). The more the threshold value, it becomes harder to influence the user and for the user $u_i$ it is denoted as $\theta_i$. If $(u_iu_j) \notin E(\mathcal{G})$, then $\mathcal{P}_{u_i \rightarrow u_j}=0$. We denote the number of nodes and edges of $\mathcal{G}$ by $n$ and $m$, respectively.

\subsection{Independent Cascade Model with Delay} 
In independent Cascade Model (IC Model), information is diffused in discrete time steps from a set of initially active nodes, known as seed nodes. In IC Model, it is assumed that the seed nodes are active at time $t=0$. Each active (i.e., influenced) node $u$ at time $t$ will try to make every inactive (i.e., not influenced) neighbor (assume that $v$) active with probability $\mathcal{P}_{u \rightarrow v}$ and succeed if $\mathcal{P}_{u \rightarrow v} \geq \theta_{j}$. If this happens, then the user $v$ will be influenced at time $t+1$. Only the nodes that are active at current time stamp take part into the triggering process. A node can change its state from inactive to active, not the vice\mbox{-}versa. Once a node becomes active, it remains active forever. At the end of the diffusion process, the number of influenced nodes by the seed set $\mathcal{S}$ is captured by the social influence function and denoted as $\sigma(\mathcal{S})$. This is basically a set function defined on the ground set $V(\mathcal{G})$, which assigns each subset of $V(\mathcal{G})$ to a positive integer, i.e., $\sigma: 2^{V(\mathcal{G})} \rightarrow \mathbb{Z}^{+}$. Though the IC Model is popular for modeling the influence spread in a social network, it is not always practical, as it strictly enforces the single time triggering by a currently active node. However, in reality, during the diffusion, delay may exist, which is not captured by the IC Model. Considering this realistic phenomenon, recently, Latency\mbox{-}Aware Independent Cascade Model has been introduced by Liu et al. \cite{liu2014influence}. In this model $\forall (uv) \in E(\mathcal{G})$ along with the diffusion probability, a delay distribution $(\phi_1, \phi_2, \dots)$ is given. Suppose the edge $(uv)$ has the diffusion probability $\mathcal{P}_{u \rightarrow v}$ and a delay distribution $(\phi_1, \phi_2, \dots, \phi_l)$, where $\sum_{i=1}^{l} \phi_{i}=1$, $\phi_1 \geq \phi_2 \geq \dots \geq \phi_l$ and $\phi_{l+1}=0$. If a user $u_i$ becomes active at time stamp $t$, then he will try to activate $u_j$ with probability $\phi_1 \mathcal{P}_{u \rightarrow v}$ at $t+1$, with probability $\phi_2 \mathcal{P}_{u \rightarrow v}$ at $t+2$, and so on. After $t+l$, the diffusion probability from the $u$ to $v$ will be $0$. It is worthwhile to mention that we enforce the constraint of maximum diffusion time and consider a node to be influenced, if it is so, with the allowable time. So, in our example, $u_j$ is influenced, if $p \in \{1,2, \dots, l\}$ and $t+p \leq \mathcal{T}$. In this paper, we study the BIM problem with delay constraint under this diffusion model.

\subsection{Budgeted Influence Maximization with Diffusion Delay}
In BIM problem, there is a cost function, $\mathcal{C}:V(\mathcal{G}) \longrightarrow \mathbb{Z}^{+}$, which assigns each user its selection cost. For the user $u_i$, its selection cost is denoted by $\mathcal{C}(u_i)$. For a subset of users $\mathcal{S}$, its cost is defined as $\mathcal{C}(\mathcal{S})=\underset{u \in \mathcal{S}}{\sum}\mathcal{C}(u)$. The influence caused due to the seed set $\mathcal{S}$ is defined as the number of nodes influenced by them at the end of diffusion process and denoted by $\mathcal{I}(\mathcal{S})$. It is measured in terms of expectation and denoted as $\mathbb{E}[|\mathcal{I}(\mathcal{S})|]$. Hence, $\sigma(\mathcal{S})=\mathbb{E}[|\mathcal{I}(\mathcal{S})|]$, where $\sigma(.)$ is the social influence function returning the number of influenced nodes for a given seed set, i.e., $\sigma: 2^{V(\mathcal{G})} \rightarrow \mathbb{R}^{+}$. Now, the BIM problem  asks for selecting a seed set $\mathcal{S}$, which maximizes $\mathbb{E}[|\mathcal{I}(\mathcal{S})|]$ and also, the total selection cost should not exceed the budget, i.e., $\mathcal{C}(\mathcal{S}) \leq \mathcal{B}$. In this paper, we study the BIM problem with time delay constraint. We denote the number of influenced nodes within the time $\mathcal{T}$ due to the seed set $\mathcal{S}$ by $\sigma^{\mathcal{T}}(\mathcal{S})$, which is equal to $E[|\mathcal{I}^{\mathcal{T}}(\mathcal{S})|]$ and our goal is to maximize this quantity within the budget $\mathcal{B}$. Formally, the BIM problem with delay can be described as follows:

\begin{tcolorbox}

\underline{\textsc{ Budgeted Influence Maximization Problem with Delay}} \\
\textbf{Input:}Social Network $\mathcal{G}(V, E, \mathcal{P}, \theta)$, a cost function $\mathcal{C}$, Budget $\mathcal{B} $, a latency distribution $\mathcal{P}^{\mathcal{L}}$ and Maximum Delay $\mathcal{T}$.

\textbf{Problem:} Find out the seed set ($\mathcal{S}$) such that $\underset{u \in \mathcal{S}}{\sum} \mathcal{C}(u) \leq \mathcal{B}$ and for any other seed set $\mathcal{S}^{'}$ with $\underset{v \in \mathcal{S}^{'}}{\sum} \mathcal{C}(v) \leq \mathcal{B}$, $ E[|\mathcal{I}^{\mathcal{T}}(\mathcal{S})|] \geq E[|\mathcal{I}^{\mathcal{T}}(\mathcal{S}^{'})|]$).
\end{tcolorbox}

\par In the next section, we describe the proposed methodologies for solving this problem.
\section{Proposed Methodology} \label{Sec:PM}
This section is broadly divided into three subsection. In the first one, we state the intuitive solution approach based on the solution methodology proposed by Nguyen and Zheng \cite{nguyen2013budgeted} for the BIM Problem. Next, we describe the proposed methodology based on the approximate computation of marginal influence spread. Finally, the last subsection contains two algorithms, which improves the efficiency of the proposed methodology in terms of computational time by exploiting the sub\mbox{-}modularity property of the time delayed influence function. However, prior to that, we present two preliminary definitions and their modifications, which will be used in our proposed methodologies.   
\begin{mydef}[Marginal Influence Gain] \label{Def:1}
	For a given seed set $\mathcal{S}$ and a node $u \in V(\mathcal{G}) \setminus \mathcal{S}$, the marginal influence gain for the node $u$ with respect to the seed set $\mathcal{S}$ is defined as the difference in the number of influenced nodes, when the seed sets are $\mathcal{S} \cup \{u\}$ and $\mathcal{S}$, respectively, and it is denoted as $\Delta_{\mathcal{I}}(\mathcal{S} \vert u)$. Hence,
	\begin{equation}
	\Delta_{\mathcal{I}}(\mathcal{S} \vert u)= \mathbb{E}[|\mathcal{I}(\mathcal{S}\cup \{u\})|] - \mathbb{E}[|\mathcal{I}(\mathcal{S})|]
	\end{equation}  
\end{mydef}
As we are counting the number of influenced nodes within the maximum allowable diffusion time $\mathcal{T}$, we modify the Definition \ref{Def:1} by imposing this constraint and define the \textit{marginal influence gain within the allowable diffusion time} as follows:
\begin{equation}
\Delta^{\mathcal{T}}_{\mathcal{I}}(\mathcal{S} \vert u)= \mathbb{E}[|\mathcal{I}^{\mathcal{T}}(\mathcal{S}\cup \{u\})|] - \mathbb{E}[|\mathcal{I}^{\mathcal{T}}(\mathcal{S})|]
\end{equation} 
Next, we define an important property of a set function.
\begin{mydef}[Submodularity of a Set Function]
	A set function $f(.)$ defined on the ground set $V(\mathcal{G})$ is submodular, if $\forall \mathcal{S}_{1} \subseteq \mathcal{S}_{2} \subset V(\mathcal{G})$ and $u \in V(\mathcal{G}) \setminus \mathcal{S}_{2}$, the following condition always holds
	\begin{center}
		$f(\mathcal{S}_{1} \cup \{u\})-f(\mathcal{S}_{1}) \geq f(\mathcal{S}_{2} \cup \{u\})-f(\mathcal{S}_{2})$
	\end{center}
\end{mydef}
Kempe et al. \cite{kempe2003maximizing} showed that the social influence function is sub\mbox{-}modular under IC model. In \cite{liu2014influence}, authors showed that the social influence function is sub\mbox{-}modular under latency aware independent cascade model, as well. We use these results in the proposed methodologies.
\subsection{Intuitive Solution Approach}
Intuitively, the easiest approach to solve the problem is the incremental greedy strategy, as presented for the SIM Problem in \cite{kempe2003maximizing} and for the BIM problem in \cite{nguyen2013budgeted}. Starting with the empty seed set $\mathcal{S}$, this process iteratively adds a node that makes the maximum marginal gain per unit cost in expected influence spread within the given time. Hence, if $\mathcal{S}^{i}$ is the seed set and $\mathcal{B}^{i}$ is the remaining budget after the $i^{th}$ iteration, the node $u$ will be added to the set $\mathcal{S}$ in $(i+1)^{th}$ iteration, i.e., $\mathcal{S}^{i+1}= \mathcal{S}^{i} \cup \{u\}$, if the condition in Equation (\ref{Eq:1}) is met.
\begin{equation} \label{Eq:1}
u=\underset{v \in V(\mathcal{G}) \setminus \mathcal{S}^{i}, \mathcal{C}(v)\leq \mathcal{B}^{i}}{argmax} \frac{\Delta^{\mathcal{T}}_{\mathcal{I}}(\mathcal{S} \vert v)}{\mathcal{C}(v)}
\end{equation} 

Algorithm \ref{Algo:Intutive} describes this procedure.
\begin{algorithm} [H]
	\caption{Proposed Methodology for the Budgeted Influence Maximization Problem with Time Delay Problem.}
	\label{Algo:Intutive}
	\begin{algorithmic}[1]
		
		\renewcommand{\algorithmicrequire}{\textbf{Input:}}
		
		\renewcommand{\algorithmicensure}{\textbf{Output:} }
		
		\REQUIRE $\mathcal{G}(V, E, \mathcal{P}, \theta)$,
		$\mathcal{C}: V(\mathcal{G}) \longrightarrow \mathbb{Z}^{+}$, Delay Distribution $\mathcal{P}^{\mathcal{L}}$ and  $\mathcal{B}$. 
		
		
		\ENSURE Seed Set $\mathcal{S} \subseteq V(\mathcal{G})$ with $\mathcal{C}(\mathcal{S}) \leq \mathcal{B}$.
		\STATE $\mathcal{S} \longleftarrow \phi$\;
		
		\WHILE{$\mathcal{B}>0$}
		\STATE $u \longleftarrow \underset{v \in V(\mathcal{G}) \setminus \mathcal{S}, \mathcal{C}(\{v\}) \leq \mathcal{B}}{argmax}  \frac{\Delta^{\mathcal{T}}_{\mathcal{I}}(\mathcal{S} \vert v)}{\mathcal{C}(v)}$\;
		\IF{$u == \textbf{null}$}
		\STATE $break$\;
		\ENDIF 
		\STATE $\mathcal{S} \longleftarrow \mathcal{S} \cup \{u\}$\;
		\STATE $\mathcal{B} \longleftarrow \mathcal{B} - \mathcal{C}(u)$\;
		\ENDWHILE
		\STATE $return \ \mathcal{S}$\;
	\end{algorithmic}
\end{algorithm}

Now, it is easy to observe that the important component of Algorithm \ref{Algo:Intutive} is to compute the expected influence spread with delay for a given seed set. The straight forward way to compute this is the following. \\
\textbf{Influenc Estimation Procedure:} Every node of the network stores its activation time and current status, which can be $influenced$ or $influenced \ with \ delay$ or $not \ influenced$. Initially, we set the status of all the seed nodes as $influenced$ and their activation time as $0$. Then, we keep on iterating for successive time instance, $t=1,2,3, \dots$ until there are no more nodes with status $influenced$ or $influenced \ with \ delay$ in the current time instance. For any arbitrary time instance $t$, we consider each node $u$, that has been influenced in the $(t-1)$\mbox{-}th time stamp and we find out the neighbors of $u$, whose statuses are $not \ influenced$ and which can be influenced by $u$. A neighbor $v$ can be influenced by $u$, if the edge probability of $(u,v)$, i.e., $\mathcal{P}_{u \rightarrow v} \ast p_{t}$, is either greater than or equal to $\theta_{v}$ and $t$ is within $\mathcal{T}$. If the current status of $v$ is $not \ influenced$ and the condition is met, then its status is changed to $influenced \ with \ delay$ and its activation time is set to $t$. If the current status of $v$ is $influenced \ with \ delay$ and $t$ is less than the current activation time of $v$, then we update its activation time by $t$. Then, we change the status of all the nodes that can be activated in time instance, $t-1$ to $influenced$ and continue for the next time instance. At the end, we return the number of nodes with the status $influenced$. However, as mentioned in \cite{kempe2003maximizing}, this process is repeated $\mathcal{R}$ times and the average is returned as the approximate value of the $\mathbb{E}[|\mathcal{I}^{\mathcal{T}}(\mathcal{S})|]$. If, $|\mathcal{S}|=k$, then traversing $\mathcal{G}$ for $\mathcal{R}$ times from $k$ different nodes 
 requires $\mathcal{O}(k(m+n)\mathcal{R})$, which is also the time complexity of this influence estimation process.
\par Though the Algorithm \ref{Algo:Intutive} is simple and intuitive, as reported in the literature \cite{kempe2003maximizing}, \cite{nguyen2013budgeted}, it cannot be used for finding the seed set even in a network of size $1000$ nodes and edges. The main reason behind this, is that the influence estimation procedure as described is heavily time consuming. Hence, our main focus is to reduce the computational burden of the marginal influence gain by an approximate computation.

\subsection{Approximate Marginal Gain Computation\mbox{-}Based Approach} 
\par For any arbitrary node $u \notin \mathcal{S}$, the probability that it will be immediately influenced in the next time stamp by a given seed set $\mathcal{S}$ can be presented by Equation (\ref{Eq:2}).
\begin{equation} \label{Eq:2}
\mathcal{P}_{\mathcal{S}\rightarrow u} =
\begin{cases}
1- \underset{v \in \mathcal{S}, (vu) \in E(\mathcal{G})}{\prod}(1 - \mathcal{P}_{v \rightarrow u}) & \text{if $u \in \mathcal{N}^{out}({\mathcal{S})}$}, \\
0 & \text{otherwise}, 
\end{cases}
\end{equation}
where $\mathcal{N}^{out}(\mathcal{S})$ is defined as the set of nodes having a directed edge from at least one of the nodes in $\mathcal{S}$, i.e., $\mathcal{N}^{out}(\mathcal{S})=\{u \vert \exists u^{'} \in \mathcal{S} \text{ and } (u^{'}u) \in E(\mathcal{G}) \}$. From the sub\mbox{-}modularity property of $\mathcal{I}^{\mathcal{T}}(.)$, we have $ E[|\mathcal{I}^{\mathcal{T}}(\{v\})|] \geq E[|\mathcal{I}^{\mathcal{T}}(\mathcal{S} \cup \{v\})|] - E[|\mathcal{I}^{\mathcal{T}}(\mathcal{S})|]$, for all $\mathcal{S} \subset V(\mathcal{G})$ and $v \in V(\mathcal{G}) \setminus \mathcal{S}$. Hence, $E[|\mathcal{I}^{\mathcal{T}}(\{v\})|]$ can be multiplied by a suitable fraction, such that in each iteration, the marginal gain in influence spread can be computed efficiently. The fraction, which is to be multiplied, should follow the criteria that, if the marginal gain of a node with respect to the current seed set is more, then the value of that should also be more for this node and vice\mbox{-}versa. Now, we present our approximation strategy of marginal gain computation in influence spread of Theorem \ref{Th:1}.
\begin{mytheo} \label{Th:1}
	The approximate value of marginal gain in influence spread for the node $v$ and the seed set $\mathcal{S}^{i}$ can be given by Equation (\ref{Eq:3}).
	\begin{equation} \label{Eq:3}
	\mathbb{E}[|\mathcal{I}^{\mathcal{T}}(\mathcal{S}^{i} \cup \{v\})|] - \mathbb{E}[|\mathcal{I}^{\mathcal{T}}(\mathcal{S}^{i})|]\approx \mathbb{E}[|\mathcal{I}^{\mathcal{T}}(v)|] \frac{\underset{(vu) \in E(\mathcal{G})}{\sum}\mathcal{P}_{v \rightarrow u} (1-\mathcal{P}_{\mathcal{S} \rightarrow u})\mathbb{E}[|\mathcal{I}^{\mathcal{T}}(u)|]}{\underset{(vu) \in E(\mathcal{G})}{\sum}\mathcal{P}_{v \rightarrow u}\mathbb{E}[|\mathcal{I}^{\mathcal{T}}(u)|]}
	\end{equation}
\end{mytheo}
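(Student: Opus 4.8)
The plan is to obtain the right\mbox{-}hand side of Equation~(\ref{Eq:3}) as a short chain of approximations starting from the exact marginal gain $\Delta^{\mathcal{T}}_{\mathcal{I}}(\mathcal{S}^{i} \vert v) = \mathbb{E}[|\mathcal{I}^{\mathcal{T}}(\mathcal{S}^{i} \cup \{v\})|] - \mathbb{E}[|\mathcal{I}^{\mathcal{T}}(\mathcal{S}^{i})|]$. Submodularity of $\mathcal{I}^{\mathcal{T}}(\cdot)$, established in \cite{liu2014influence} and quoted above, already gives the one\mbox{-}sided bound $\Delta^{\mathcal{T}}_{\mathcal{I}}(\mathcal{S}^{i} \vert v) \le \mathbb{E}[|\mathcal{I}^{\mathcal{T}}(v)|]$, so the task reduces to arguing that the multiplicative correction factor we attach to $\mathbb{E}[|\mathcal{I}^{\mathcal{T}}(v)|]$ is precisely the ratio appearing in~(\ref{Eq:3}) --- a quantity that should lie in $[0,1]$ and should degenerate to $1$ when $\mathcal{S}^{i}=\emptyset$, where the estimate must become exact.

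First I would decompose the standalone spread $\mathbb{E}[|\mathcal{I}^{\mathcal{T}}(v)|]$ according to the out\mbox{-}edge of $v$ through which the cascade first leaves $v$. Conditioned on $v$ being the only seed, the nodes it influences within time $\mathcal{T}$ are essentially reached by first activating some out\mbox{-}neighbour $u$ with $(vu)\in E(\mathcal{G})$ and then letting $u$'s own sub\mbox{-}cascade run; treating those per\mbox{-}neighbour sub\mbox{-}cascades as approximately disjoint and weighting neighbour $u$ by the expected \emph{flow} $\mathcal{P}_{v \rightarrow u}\,\mathbb{E}[|\mathcal{I}^{\mathcal{T}}(u)|]$ it carries, I get the split
\[
\mathbb{E}[|\mathcal{I}^{\mathcal{T}}(v)|] \;\approx\; \sum_{(vu)\in E(\mathcal{G})} \mathbb{E}[|\mathcal{I}^{\mathcal{T}}(v)|]\,\frac{\mathcal{P}_{v \rightarrow u}\,\mathbb{E}[|\mathcal{I}^{\mathcal{T}}(u)|]}{\sum_{(vu')\in E(\mathcal{G})}\mathcal{P}_{v \rightarrow u'}\,\mathbb{E}[|\mathcal{I}^{\mathcal{T}}(u')|]}.
\]
Next I would bring in $\mathcal{S}^{i}$: adding $v$ to $\mathcal{S}^{i}$ produces \emph{new} influence along the $u$\mbox{-}branch only to the extent that $u$ is not already reached by $\mathcal{S}^{i}$, an event whose probability I approximate by $1-\mathcal{P}_{\mathcal{S} \rightarrow u}$ with $\mathcal{P}_{\mathcal{S} \rightarrow u}$ as in Equation~(\ref{Eq:2}). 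Multiplying the $u$\mbox{-}th term of the split above by this factor gives
\[
\Delta^{\mathcal{T}}_{\mathcal{I}}(\mathcal{S}^{i} \vert v) \;\approx\; \mathbb{E}[|\mathcal{I}^{\mathcal{T}}(v)|]\,\frac{\sum_{(vu)\in E(\mathcal{G})}\mathcal{P}_{v \rightarrow u}\,(1-\mathcal{P}_{\mathcal{S} \rightarrow u})\,\mathbb{E}[|\mathcal{I}^{\mathcal{T}}(u)|]}{\sum_{(vu)\in E(\mathcal{G})}\mathcal{P}_{v \rightarrow u}\,\mathbb{E}[|\mathcal{I}^{\mathcal{T}}(u)|]},
\]
which is exactly~(\ref{Eq:3}). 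To close, I would record the sanity checks: the correction factor is a convex combination of the numbers $1-\mathcal{P}_{\mathcal{S} \rightarrow u}\in[0,1]$, hence itself lies in $[0,1]$; it equals $1$ when $\mathcal{S}^{i}=\emptyset$, collapsing the estimate to $\mathbb{E}[|\mathcal{I}^{\mathcal{T}}(v)|]$ as required; and it shrinks toward $0$ when $v$'s out\mbox{-}neighbourhood is largely already covered by $\mathcal{S}^{i}$, matching the submodular intuition.

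The main obstacle is that this is genuinely an approximation rather than an identity, so the ``proof'' is really a justification of two modelling steps, and I would flag the lossy points explicitly rather than chase error bounds: (a) the per\mbox{-}out\mbox{-}neighbour cascades of $v$ overlap with one another and with $v$ itself, so the linear split over out\mbox{-}edges silently drops the inclusion--exclusion corrections; and (b) $\mathcal{P}_{\mathcal{S} \rightarrow u}$ in Equation~(\ref{Eq:2}) is only the one\mbox{-}hop activation probability from $\mathcal{S}$, not the true probability that $u$ is ever influenced by $\mathcal{S}$ within time $\mathcal{T}$, so $1-\mathcal{P}_{\mathcal{S} \rightarrow u}$ is a further proxy. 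Accordingly I expect the cleanest presentation to be the two displayed ``$\approx$'' steps above, each with a one\mbox{-}line rationale, since a rigorous error analysis would require handling exactly the combinatorics of overlapping cascades under the latency\mbox{-}aware model that the theorem is designed to circumvent.
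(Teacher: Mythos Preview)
Your proposal is correct and follows essentially the same heuristic route as the paper: invoke submodularity for the upper bound $\Delta^{\mathcal{T}}_{\mathcal{I}}(\mathcal{S}^{i}\vert v)\le \mathbb{E}[|\mathcal{I}^{\mathcal{T}}(v)|]$, interpret the numerator as the ``new'' influence $v$ contributes through out\mbox{-}neighbours not already covered by $\mathcal{S}$ and the denominator as $v$'s total influence through its out\mbox{-}neighbours, and conclude that the ratio is a suitable $[0,1]$\mbox{-}valued correction factor. Your presentation is in fact somewhat more careful than the paper's --- the explicit two\mbox{-}step split, the sanity checks for $\mathcal{S}^{i}=\emptyset$ and for a fully covered neighbourhood, and the flagged lossy points (a) and (b) --- but the underlying justification is the same heuristic argument, not a rigorous derivation with error bounds, and the paper makes no stronger claim either.
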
 
\begin{proof}
	As stated previously, due to the sub\mbox{-}modularity property, the marginal gain in influence spread due to the user $v$ for the seed $\mathcal{S}$ will always be less than $\mathbb{E}[|\mathcal{I}^{\mathcal{T}}(v)|]$. Now, if $(vu) \in E(\mathcal{G})$, the quantity $\mathcal{P}_{v \rightarrow u} (1-\mathcal{P}_{\mathcal{S} \rightarrow u})$ represents the probability that none of the seed nodes, however, only $v$ can influence $u$. Multiplying this quantity with $\mathbb{E}[|\mathcal{I}^{\mathcal{T}}(u)|] $ gives the number of influenced nodes $v$ can generate by influencing the node $u$. If we sum up this quantities for all the neighbors of $v$, we obtain the number of influenced nodes due to the node $v$ only. Hence, the numerator of the fraction in the right hand size of Equation (\ref{Eq:3}) gives this quantity. It is important to observe, when the value of $\mathcal{P}_{\mathcal{S} \rightarrow u}$ is less, the value of the numerator will be more. The denominator gives the number of influenced nodes by the node $v$. Hence, the value of this fraction will always be less than $1$. This implies that the marginal gain will be less than $\mathbb{E}[|\mathcal{I}^{\mathcal{T}}(v)|]$. Also, the gain will be more, when the number of influenced nodes is due to the node $v$ only (not by any other nodes of $\mathcal{S}$). Hence, the quantity in the right hand side gives a suitable approximation of the marginal gain computation of influence spread.    
\end{proof}  
Now, if we compute the marginal gain in influence spread as described in Theorem \ref{Th:1}, we have the following advantage. According to Equation (\ref{Eq:3}), for the marginal gain computation of the node $v$, we need to estimate the spread by the node $v$ and its outgoing neighbors individually. On the contrary, if we go by the intuitive approach as described previously, in each iteration, we need to calculate the marginal gain for all the individual nodes not present in the current seed set with respect to the current seed set, which is heavily time consuming. So, if we compute the marginal gain in influence spread, as stated in Theorem \ref{Th:1}, computationally, it will be much more efficient. Algorithm \ref{Algo:2} describes the proposed methodology. 
\begin{algorithm} [h]
	\caption{Proposed Methodology for the Budgeted Influence Maximization Problem with Time Delay Problem.}
	\label{Algo:2}
	\begin{algorithmic}[1]
		
		\renewcommand{\algorithmicrequire}{\textbf{Input:}}
		
		\renewcommand{\algorithmicensure}{\textbf{Output:} }
		
		\REQUIRE  $\mathcal{G}(V, E, \mathcal{P}, \theta)$,
		$\mathcal{C}: V(\mathcal{G}) \longrightarrow \mathbb{Z}^{+}$, $\mathcal{P}^{\mathcal{L}}$, and $\mathcal{B}$. 
		
		

		\ENSURE Seed Set $\mathcal{S} \subseteq V(\mathcal{G})$ with $\mathcal{C}(\mathcal{S}) \leq \mathcal{B}$.
		

		\STATE $\mathcal{S} \longleftarrow \phi$\;
		
		\STATE $Flag = 1$\;
		
		\FOR{All $u \in V({\mathcal{G}})$}
		
		\STATE $\text{Compute } \frac{\sigma_{\mathcal{T}}(u)}{\mathcal{C}(u)} \text{using Equation \ref{Eq:3}}$\;
		
		\ENDFOR
		
		\STATE $v\longleftarrow \underset{u \in V({\mathcal{G}}), \mathcal{C}(u) \leq \mathcal{B}}{argmax} \frac{\sigma_{\mathcal{T}}(u)}{\mathcal{C}(u)}$\;
		
		\STATE $\mathcal{S} \longleftarrow \mathcal{S} \cup \{v\}$\;
		
		\FOR {All $u \in V({\mathcal{G}}) \setminus \{v\}$}
		
		\IF{$u \in \mathcal{N}^{out}(v)$}
		
		\STATE $\mathcal{P}_{\mathcal{S} \rightarrow u}= 1- \underset{w \in \mathcal{S}, (wu)\in E(\mathcal{G})}{\prod} (1- \mathcal{P}_{w \rightarrow u})$\;

		\ELSE
		
		\STATE $\mathcal{P}_{\mathcal{S} \rightarrow u}=0$\;
		
		\ENDIF
		
		\ENDFOR
		
		\WHILE{(Flag)}
		
		\STATE $Flag \leftarrow 0$\;
		
		\STATE $u \longleftarrow \underset{v \in V(\mathcal{G}) \setminus \mathcal{S}, \mathcal{C}(\mathcal{S} \cup \{v\}) \leq \mathcal{B}}{argmax} \frac{\sigma_{\mathcal{T}}(v)}{\mathcal{C}(v)} \frac{\underset{(vu) \in E(\mathcal{G})}{\sum}\mathcal{P}_{v \rightarrow u} (1-\mathcal{P}_{\mathcal{S} \rightarrow u})\sigma_{\mathcal{T}}(u)}{\underset{(vu) \in E(\mathcal{G})}{\sum}\mathcal{P}_{v \rightarrow u}\sigma_{\mathcal{T}}(u)}$\;
		
		\IF{$u == \textbf{null}$}
		
		\STATE $break$\;
		
		\ENDIF 
		
		\STATE $\mathcal{S} \longleftarrow \mathcal{S} \cup \{u\}$\;
		
		\STATE $Flag=1$\;
		
		
		\FOR {All $u \in \mathcal{N}^{out}(\mathcal{S})$}
		
		\STATE $\text{Update } \mathcal{P}_{\mathcal{S}\rightarrow u} \text{ using Equation \ref{Eq:2}}$\;

		\ENDFOR
		
		\ENDWHILE

		\STATE $return \ \mathcal{S}$\;
		
		
	\end{algorithmic}
\end{algorithm}
\par Algorithm \ref{Algo:2} illustrates the proposed methodology for solving the BIM Problem with delay. For a given social network $\mathcal{G}$ with selection cost of each user, a fixed budget $\mathcal{B}$ and maximum allowable diffusion time $\mathcal{T}$, Algorithm \ref{Algo:2} selects a reasonably well seed set, $\mathcal{S}$  within affordable computational time. The basic intuition behind the Algorithm \ref{Algo:2} is that a node is added to the seed set, if it has the maximum marginal influence gain (computed using Equation (\ref{Eq:3})) among all the nodes not already present in the seed set and if the cost of adding that node to the seed set is within the given budget. Initially, we make the seed set, $\mathcal{S}$ empty and the boolean variable \textit{Flag} to $1$.  Initially, we calculate the influence spread $\sigma_{\mathcal{T}}$ per unit cost due to each node in the graph, using the influence estimation process as described, and the node causing the maximum spread is included in the seed set $\mathcal{S}$. Then, for all the remaining nodes, we calculate $\mathcal{P}_{\mathcal{S} \rightarrow u}$ using Equation (\ref{Eq:2}). Next, we iteratively keep on adding a node to the seed set, $\mathcal{S}$ until the budget is exhausted or we cannot find another node within the given budget. In each iteration, we select a node with the largest marginal gain in influence spread, calculated using Equation (\ref{Eq:3}). Then, we update $\mathcal{P}_{\mathcal{S} \rightarrow u}$ for all nodes that are neighbors of the currently selected seed nodes. At the end of the Algorithm \ref{Algo:2} returns one of best quality seed set for diffusion.
\par Now, we investigate the computational time and space requirement of Algorithm \ref{Algo:2}. Lines $1$ and $2$ both will take $\mathcal{O}(1)$ time. Next, each individual node's influence spread is calculated using the influence estimation procedure, and among them, the node that causes the maximum spread is included in the empty seed set. Hence, Lines $3$ to $7$ lead to the time requirement of $\mathcal{O}(n(n+m)\mathcal{R})$. By using Line $8$ through $14$ diffusion probabilities of all the neighbors of the currently selected seed node are computed, which requires $\mathcal{O}(d^{out}_{m})$ time in the worst case, where $d^{out}_{m}$ is the maximum out degree of any node in $\mathcal{G}$. Now, the while loop starting from Lines $15$ through $26$ iteratively select the seed nodes by approximately computing the marginal gain in influence spread. Inside the loop, two operations are performed. The first one is to find the appropriate node at Line $17$ and this requires $\mathcal{O}(n)$ calls to the influence estimation procedure, in the worst case. However, we need not compute each node's individual influence spread value, as we have already computed once in Line $4$. Hence, the required time for executing Line $17$ will be of $\mathcal{O}(n d_{m}^{out})$. The second one is the updation of the influence probabilities of the neighboring nodes of the current seed nodes in Line $24$. Let $\mathcal{C}_{min}$ denotes the minimum selection cost among all the nodes. Cardinality of the seed set can be at most $\frac{\mathcal{B}}{\mathcal{C}_{min}}$ and  hence, the number of times the While loop runs will be at most $\frac{\mathcal{B}}{\mathcal{C}_{min}}$. For each iteration, the running time from Line 24 to 26 will be $\mathcal{O}(\frac{\mathcal{B}}{\mathcal{C}_{min}} n )$, where $d_{m}^{in}$ is the maximum in degree of any node in $\mathcal{G}$. Other than these two, rest of the steps lying inside the loop has the running time of $\mathcal{O}(1)$. Hence, for each iteration, the running time of the While loop starting from Line $15$ through $27$ will be  $\mathcal{O}(\frac{\mathcal{B}}{\mathcal{C}_{min}} n + n d_{m}^{out}) \simeq n(\frac{\mathcal{B}}{\mathcal{C}_{min}} + d_{m}^{out})$. As the number of iterations of the While loop is of $\mathcal{O}(\frac{\mathcal{B}}{\mathcal{C}_{min}})$, the total running time of the While loop is $\mathcal{O}(\frac{\mathcal{B}}{\mathcal{C}_{min}} n (\frac{\mathcal{B}}{\mathcal{C}_{min}} + d_{m}^{out}))$. Hence, the total running time of the Algorithm \ref{Algo:2} is $\mathcal{O}(n(n+m)\mathcal{R}+ d_{m}^{out} + \frac{\mathcal{B}}{\mathcal{C}_{min}} n (\frac{\mathcal{B}}{\mathcal{C}_{min}} + d_{m}^{out}))$. The extra space required by Algorithm \ref{Algo:2} is $\mathcal{O}(n)$ for storing the individual node's influence ability and $\mathcal{O}(n)$ for storing each node's influence probability from the seed set. The formal statement is presented as Theorem \ref{Th:2}.
\begin{mytheo} \label{Th:2}
	Running time and space requirement of Algorithm \ref{Algo:2} is of $\mathcal{O}(n(n+m)\mathcal{R}+ d_{m}^{out} + \frac{\mathcal{B}}{\mathcal{C}_{min}} n (\frac{\mathcal{B}}{\mathcal{C}_{min}} + d_{m}^{out}))$ and $\mathcal{O}(n)$, respectively.
\end{mytheo}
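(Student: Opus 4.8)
The plan is to establish the bound by a standard line-by-line accounting of Algorithm \ref{Algo:2}, summing the cost of each block and then retaining the dominant terms. First I would dispose of the trivial initialisation: Lines $1$--$2$ are simple assignments and contribute $\mathcal{O}(1)$. The first substantial block is the \textbf{for} loop of Lines $3$--$7$, which invokes the influence estimation procedure once for every vertex; since one invocation traverses $\mathcal{G}$ for $\mathcal{R}$ rounds starting from a single source, it costs $\mathcal{O}((n+m)\mathcal{R})$, so the whole loop costs $\mathcal{O}(n(n+m)\mathcal{R})$. Picking the cost-normalised maximiser in Line $6$ is an $\mathcal{O}(n)$ scan and is absorbed. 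The \textbf{for} loop of Lines $8$--$14$ touches only the out-neighbours of the single vertex just added, so it runs in $\mathcal{O}(d_m^{out})$ time in the worst case.

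The heart of the argument is the \textbf{while} loop of Lines $15$--$27$. The key observation that keeps this cheap is that $\sigma_{\mathcal{T}}(\cdot)$ for the \emph{individual} vertices has already been computed in Line $4$ and is cached; hence evaluating the approximate marginal gain of a candidate $v$ in Line $17$ via Equation (\ref{Eq:3}) only requires scanning the out-edges of $v$ and reading the stored $\sigma_{\mathcal{T}}$ values, i.e. $\mathcal{O}(d_m^{out})$ per candidate and $\mathcal{O}(n d_m^{out})$ to find the argmax over all remaining vertices. The probability-refresh block of Lines $24$--$26$ updates $\mathcal{P}_{\mathcal{S}\rightarrow u}$ over the affected neighbourhood, which I would bound by $\mathcal{O}(\frac{\mathcal{B}}{\mathcal{C}_{min}} n)$ using the fact that $|\mathcal{S}|$ never exceeds $\frac{\mathcal{B}}{\mathcal{C}_{min}}$ and that $\mathcal{P}_{\mathcal{S}\rightarrow u}$ is a product over the at most $|\mathcal{S}|$ in-edges from $\mathcal{S}$. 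Combining, one iteration of the loop costs $\mathcal{O}(\frac{\mathcal{B}}{\mathcal{C}_{min}} n + n d_m^{out})$.

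Next I would bound the number of iterations of the \textbf{while} loop. Since every iteration adds a vertex of cost at least $\mathcal{C}_{min}$ and the cumulative selection cost never exceeds $\mathcal{B}$, the loop runs at most $\frac{\mathcal{B}}{\mathcal{C}_{min}}$ times. Multiplying the per-iteration cost by this bound yields $\mathcal{O}(\frac{\mathcal{B}}{\mathcal{C}_{min}} n (\frac{\mathcal{B}}{\mathcal{C}_{min}} + d_m^{out}))$ for the whole loop. Adding the three blocks --- $\mathcal{O}(n(n+m)\mathcal{R})$, $\mathcal{O}(d_m^{out})$, and the loop term --- and keeping the dominant contributions gives the claimed running time $\mathcal{O}(n(n+m)\mathcal{R} + d_m^{out} + \frac{\mathcal{B}}{\mathcal{C}_{min}} n (\frac{\mathcal{B}}{\mathcal{C}_{min}} + d_m^{out}))$. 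For the space bound I would simply note that, beyond the input, the algorithm stores one scalar $\sigma_{\mathcal{T}}$ value per vertex and one $\mathcal{P}_{\mathcal{S}\rightarrow u}$ value per vertex, so the auxiliary space is $\mathcal{O}(n) + \mathcal{O}(n) = \mathcal{O}(n)$.

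I expect the main obstacle to be the cached-cost accounting inside the \textbf{while} loop: making precise that the expensive $\mathcal{O}((n+m)\mathcal{R})$ spread estimation is paid only once per vertex in Line $4$ and never repeated in Line $17$, and carefully justifying the $\mathcal{O}(\frac{\mathcal{B}}{\mathcal{C}_{min}} n)$ bound on the probability updates, which depends on how $\mathcal{N}^{out}(\mathcal{S})$ and the incident-edge count grow with $|\mathcal{S}|$. Everything else is routine summation and dominant-term extraction.
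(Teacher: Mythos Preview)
Your proposal is correct and mirrors the paper's own argument almost step for step: the same line-by-line accounting, the same $\mathcal{O}(n(n+m)\mathcal{R})$ charge for the initial per-vertex spread estimation, the same $\mathcal{O}(d_m^{out})$ for Lines~$8$--$14$, the same caching observation that keeps Line~$17$ at $\mathcal{O}(n d_m^{out})$, the same $\frac{\mathcal{B}}{\mathcal{C}_{min}}$ bound on the seed-set size and hence on the loop count, and the same $\mathcal{O}(n)+\mathcal{O}(n)$ space tally. There is no substantive difference in approach or in the handling of the points you flagged as obstacles.
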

\subsection{Improving Efficiency of Algorithm \ref{Algo:2} by Exploiting Sub\mbox{-}modularity}
Though our proposed algorithm mitigates the scalability issue of Algorithm \ref{Algo:Intutive} to certain extent, it is not sufficient for processing of networks with the larger size. Therefore, we improvise Algorithm \ref{Algo:2} by reducing redundant marginal influence gain computation. Algorithm \ref{Algo:3} describes this procedure.
\begin{algorithm}[h] 
	\caption{Method for improvising Algorithm \ref{Algo:2} by exploiting the sub\mbox{-}modularity property of the time delayed influence function}
	\label{Algo:3}
	\begin{algorithmic}[1]
		
		\renewcommand{\algorithmicrequire}{\textbf{Input:}}
		
		\renewcommand{\algorithmicensure}{\textbf{Output:} }
		
		\REQUIRE $\mathcal{G}(V, E, \mathcal{P}, \theta)$, $\mathcal{C}: V(\mathcal{G}) \longrightarrow \mathbb{Z}^{+}$, $\mathcal{P}^{\mathcal{L}}$, $\mathcal{T}$ and  $\mathcal{B}$.

		\ENSURE Seed Set $\mathcal{S} \subseteq V(\mathcal{G})$ with $\mathcal{C}(\mathcal{S}) \leq \mathcal{B}$.
		

		\FOR {All $u \in V(\mathcal{G})$}

		\STATE $\text{Calculate } \sigma_{\mathcal{T}}(u)$\;
		
		\ENDFOR
		
		\STATE $u\longleftarrow\underset{v \in V(\mathcal{G}),\mathcal{C}(v)\le \mathcal{B}}{argmax}\sigma_{\mathcal{T}}(v)/\mathcal{C}(v)$\;
		
		\STATE $\mathcal{S}\leftarrow \{u\}$\;
		
		\STATE $\forall u \in V(\mathcal{G}),\delta_{\mathcal{S}}\leftarrow +\infty$\;
		
		\IF {$w \in \mathcal{N}^{out}(u)$}

		\STATE $\mathcal{P}_{\mathcal{S} \rightarrow w}\leftarrow \mathcal{P}_{u \rightarrow w}$\;

		\ELSE
		
		\STATE $\mathcal{P}_{\mathcal{S} \rightarrow w}\leftarrow 0$;
		
		\ENDIF
		
		\STATE $Flag = 1$\;
		
		\WHILE {$(Flag)$}

		\STATE $Flag \leftarrow 0$\;
		
		\FOR {All $u \in V(\mathcal{G})/\mathcal{S}$}

		\STATE $cur_{u} \leftarrow False$\;
		
		\STATE $\delta_{u} \leftarrow 0$\;
		
		\ENDFOR
		
		\WHILE {($True$)}
		
		\STATE $u \longleftarrow \underset{v \in V(\mathcal{G})/\mathcal{S},\mathcal{C}(\mathcal{S}\cup \{v\})\leq \mathcal{B}}{argmax} \delta_{u}$\;
		
		\IF {$(u = \textbf{null})$}

		\STATE $break$\;
		
		\ENDIF
		
		\IF {$(cur_{u})$}

		\STATE $\mathcal{S} \leftarrow \mathcal{S} \cup \{u\}$\;
		
		\STATE $Flag \leftarrow 1$\;
		
		\STATE $break$\;
		
		
		\ELSE
		
		\STATE $\delta_{u}\leftarrow \frac{\sigma_{\mathcal{T}}(\{u\})}{\mathcal{C}(u)} \frac{\underset{(uw) \in E(\mathcal{G})}{\sum}\mathcal{P}_{u \rightarrow w} (1-\mathcal{P}_{\mathcal{S} \rightarrow w}) \sigma_{\mathcal{T}}(\{w\})}{\underset{(uw) \in E(\mathcal{G})}{\sum}\mathcal{P}_{u \rightarrow w}\sigma_{\mathcal{T}}(\{w\})}$\;
		
		\STATE $cur_{u}\leftarrow True$\;
		
		\ENDIF
		
		\ENDWHILE
		
		\FOR {All $w \in \mathcal{N}^{out}(\mathcal{S})$}

		\STATE $\text{Update } \mathcal{P}_{\mathcal{S}\rightarrow w}$\;
		
		\ENDFOR
		
		\ENDWHILE
		
		\STATE $return \ \mathcal{S}$\;
		
		
	\end{algorithmic}
\end{algorithm}
\par Algorithm \ref{Algo:3} suggests an improvisation to our proposed methodology. The main bottleneck of Algorithm \ref{Algo:2} is that in each iteration, for all the non\mbox{-}seed nodes, we need to compute the marginal influence gain. However, it is important to observe that many of these computations are redundant and hence, can be avoided for improving the efficiency of Algorithm \ref{Algo:2} to a great extent. By the sub\mbox{-}modularity property of the time\mbox{-}delayed influence function, marginal influence gain of a non\mbox{-}seed node (say $u$) with respect to the seed set at $i$\mbox{-}th iteration ($\mathcal{S}^{i}$) is always either more than or equal to that with respect to seed set at $(i+1)$\mbox{-}th iteration. If we compute the marginal influence gain of the non\mbox{-}seed nodes, taken in descending order of their previously calculated marginal influence gains, then the sub\mbox{-}modularity can be exploited. Initially, as the seed set is empty, the individual influence spread is computed in the first iteration, and the node with the highest spread is included in the seed set. For the successive iterations, we compute the marginal influence gain of the non seed nodes in the descending order of these values in the previous iteration. In an arbitrary iteration, at some point, before we calculate the value for all the nodes, we may get a node for which the marginal influence gain has already been calculated in the present iteration. This is marked by a boolean array. Thus, we add that node to the seed set and stop the present iteration and are not calculating the marginal influence gain of all the non\mbox{-}seed nodes. Thus, we can reduce redundant marginal influence spread computations.
Though the asymptotic complexity remains the same, it is evident from our experimental results that Algorithm \ref{Algo:3} takes less computation time than that of Algorithm \ref{Algo:2}, when it is run with real-life social networks, when the dataset is quite large. In the literature, the exploitation of sub\mbox{-}modularity property of the influence function has been previously used by  \cite{leskovec2007cost} to improve the incremental greedy algorithm for SIM problem proposed by \cite{kempe2003maximizing}.
\begin{algorithm}[H]
	\caption{Method for improvising Algorithm \ref{Algo:3} by exploiting the sub\mbox{-}modularity property of the time delayed influence function}
	\label{Algo:4}
	\begin{algorithmic}[1]
		\renewcommand{\algorithmicrequire}{\textbf{Input:}}
		\renewcommand{\algorithmicensure}{\textbf{Output:} }
		\REQUIRE $\mathcal{G}(V, E, \mathcal{P}, \theta)$,
		$\mathcal{C}: V(\mathcal{G}) \longrightarrow \mathbb{Z}^{+}$, $\mathcal{P}^{\mathcal{L}}$, $\mathcal{T}$ and  $\mathcal{B}$. 
		
		\ENSURE Seed Set $\mathcal{S} \subseteq V(\mathcal{G})$ with $\mathcal{C}(\mathcal{S}) \leq \mathcal{B}$
		\STATE $\mathcal{S}\leftarrow \phi$\;
		\STATE $last\_seed\leftarrow null$\;
		\STATE $cur\_best \leftarrow null$\;
		\STATE $mg1 = Create\_Vector(len (V(\mathcal{G})),0)$\;
		\STATE $prev\_best = Create\_Vector(len (V(\mathcal{G})),null)$\;
		\STATE $mg2 = Create\_Vector(len (V(\mathcal{G})),0)$\;
		\STATE $flag = Create\_Vector(len (V(\mathcal{G})),0)$\;
		\FOR {$All \ u \in V(\mathcal{G})$}
		
		\STATE $mg1[u]\leftarrow\sigma_{\mathcal{T}}(u)$\;
		\STATE $prev\_best[u]\leftarrow cur\_best$\;
		\STATE $mg2[u]\leftarrow \sigma_{\mathcal{T}}(\{u,cur\_best\})$\;
		\STATE $flag[u]\leftarrow 0$\;
		\STATE $cur\_best\leftarrow\underset{u \in V(\mathcal{G}),\mathcal{C}(u)\le \mathcal{B}}{argmax} \ mg1[u]$\;
		\ENDFOR
		\algstore{myalg2}
	\end{algorithmic}
\end{algorithm}
\begin{algorithm}[h]
	\begin{algorithmic}[1]
		\algrestore{myalg2}
		\STATE $\mathcal{S}\leftarrow\mathcal{S}\cup\{cur\_best\}$\;
		\STATE $last\_seed\leftarrow cur\_best$\;
		\WHILE {($True$)}
		\STATE $u \longleftarrow \underset{v \in V(\mathcal{G})/\mathcal{S},\mathcal{C}(\mathcal{S}\cup\{v\})\le\mathcal{B}}{argmax}mg1[v]$\;

		\IF{$u == \textbf{null}$}
		
		\STATE $break$\;
		\ENDIF
		
		\IF{$(flag[u] == \mid\mathcal{S}\mid)$}
		
		\STATE $\mathcal{S}\leftarrow\mathcal{S}\cup\{u\}$\;
		\STATE $last\_seed = u$\;
		\STATE $Flag = 1$\;
		\STATE $continue$\;
		\ELSIF{$prev\_best[u] == last\_seed$}
		
		\STATE $mg1[u] \leftarrow mg2[u]$\;
		\ELSE
		
		\STATE $mg1[u]\leftarrow  \sigma_{\mathcal{T}}(\{u\}) \frac{\underset{(uw) \in E(\mathcal{G})}{\sum}\mathcal{P}_{u \rightarrow w} (1-\mathcal{P}_{\mathcal{S} \rightarrow w})\sigma_{\mathcal{T}}(\{w\})}{\underset{(uw) \in E(\mathcal{G})}{\sum}\mathcal{P}_{u \rightarrow w}\sigma_{\mathcal{T}}(\{w\})}$\;
		\STATE $prev\_best[u] = cur\_best$\;
		\STATE $\mathcal{M} \leftarrow \mathcal{S} \cup \{cur\_best\}$\;
		\STATE $mg2[u] \leftarrow \sigma_{\mathcal{T}}(\{u\}) \frac{\underset{(uw) \in E(\mathcal{G})}{\sum}\mathcal{P}_{u \rightarrow w} (1-\mathcal{P}_{\mathcal{M} \rightarrow w})\sigma_{\mathcal{T}}(\{w\})}{\underset{(uw) \in E(\mathcal{G})}{\sum}\mathcal{P}_{u \rightarrow w}\sigma_{\mathcal{T}}(\{w\})}$\; 
		\ENDIF
		\STATE $flag[u] = \mid \mathcal{S} \mid $\;
		\STATE $cur\_best \longleftarrow \underset{v \in V(\mathcal{G})/\mathcal{S},\mathcal{C}(\mathcal{S}\cup\{v\})\le\mathcal{B}}{argmax}mg1[v]$\;
		\ENDWHILE
		\STATE $return \ \mathcal{S}$\;
	\end{algorithmic}
\end{algorithm}
\par Algorithm \ref{Algo:4} gives a method that can even decrease the computational time of Algorithm \ref{Algo:3} and thus, make our proposed algorithm even more efficient. For each node $u$, we store the following information: (i) $mg1[u]$, which is the marginal gain of $u$ with respect to the current seed set, $\mathcal{S}$, (ii) $prev\_best[u]$, which is the node with the maximum marginal gain among all nodes considered in the present iteration before $u$, (iii) $mg2[u]$, which is the marginal gain of u with respect to the seed set {$\mathcal{S} \cup prev\_best$} and (iv) $flag[u]$ is the iteration number, when the $mg1[u]$ was last updated. We select each seed node, as we had done in Algorithm \ref{Algo:3}, choosing the node with the maximum $mg1$, which is the marginal influence gain for the node and add it to the seed set, if its $flag$ value is equal to the length of the seed set, which signifies its influence over the entire seed set. Here also, we do not need to calculate the value of $mg1$ for all the non-seed nodes in each iteration due to the sub\mbox{-}modularity property. However, optimization is done in calculating the value of $mg1$. If, for any iteration, the $prev\_best$ for the selected node, $u$ is the last seed node selected, then, we do not need to calculate the value of $mg1[u]$ all over again and can directly assign the value of $mg2[u]$ to $mg1[u]$. This is due to the fact, the value of $mg2[u]$ has already effectively been calculated with respect to the last added seed node. This algorithm has yielded a considerable decrease in the computational time, when tested with real\mbox{-}life datasets and thus, it proves to be more efficient than our previous algorithms. Here, we want to highlight that the computational complexity of both Algorithm \ref{Algo:3} and \ref{Algo:4} is same as Algorithm \ref{Algo:2} in the worst case. Hence, we do not calculate it separately.
\section{Experiments} \label{Sec:EE}
Here, the experimental details of the solution methodologies have been described. First, we briefly describe the datasets.
\subsection{Dataset Descrption}
The experimentation carried out in this work uses the following three real\mbox{-}life, publicly available social network datasets. 
\begin{itemize}
	\item \textbf{Email-Eu-core network} \footnote{\url{http://snap.stanford.edu/data/email-Eu-core.html}} \cite{leskovec2007graph}: This is a network among a group of persons and generated based on an  email exchange data from a European research institution. Between two persons $u$ and $v$ there is an edge, if $u$ sends a mail to person  $v$. 
	\item \textbf{Facebook Network} \footnote{\url{http://snap.stanford.edu/data/egonets-Facebook.html}} \cite{leskovec2012learning}: This is a Facebook ego network, where the nodes of the network are the users and an edge between two users signifies that the corresponding users are friends in Facebook. This dataset was created by a survey participants.
	\item \textbf{PHY Network} \footnote{\url{https://arxiv.org/archive/physics}} \cite{chen2010scalable} \cite{chen2009efficient}: This is an academic collaboration network among the researchers of the Physics section crawled from \url{Arxiv.org}. Each author is represented by a node and two nodes are connected, if the corresponding researchers coauthored at least one paper.
\end{itemize}

We download the first two datasets from \emph{Stanford Large Network Dataset Collection}\footnote{\url{http://snap.stanford.edu/data/index.html}} and the third one from \url{https://www.microsoft.com/en-us/research/people/weic/#!selected-projects}. The nature of the first two datasets are same, because they are formed based on the information exchanged among a group of users. However, the third one is an academic collaboration network, implicitly derived from the co\mbox{-}author relationships based on their submitted research articles. All these datasets have been extensively used in influence maximization literature \cite{swetha2017identification} \cite{wang2012scalable}. Table \ref{Tab:1} gives the basic statistics of the datasets. 
\begin{center}
	\begin{table}[H]
		\centering
		\caption{Basic statistics of the datasets.}
		\begin{tabular}{|l|l|l|l|l|}
			\hline
			\textit{Dataset Name} & $n$ & $m$ & \textit{Avg Deg} & \textit{Avg Clus Coeff} \\
			\hline
			Email-Eu-core network & 1005 & 25571 & 25.443 & 0.3994 \\
			\hline
			Facebook Dataset & 4039 & 88234 & 43.6910 & 0.6055 \\
			\hline
			PHY Network & 37154 & 231584 & 12.466 & 0.2371\\
			\hline
		\end{tabular}
		\label{Tab:1}
	\end{table}
\end{center}
\subsection{Experimental Setup}
\subsubsection{Setting of diffusion probability} In this study, we consider the following two diffusin probability, which are common in the literature \cite{arora2017debunking, tang2018efficient}.
\begin{itemize}
\item The first one is the \textit{uniform setting}, where each edge of the network has the same diffusion probability. Now, it is an important question what numerical value we should choose for this fixed probability value. Now, this is an context dependent issue and based on existing literature, in this paper we consider this value as $0.1$ (denoted as $p_c=0.1$) \cite{banerjee2019combim, xu2016finding}. 
\item the second one is the \textit{trivalency setting}, where edges have been assigned the probability value uniformly at random from the set $\{0.1, 0.01, 0.001\}$. This kind of setup is consistent with the existing literature \cite{nguyen2013budgeted}.
\end{itemize} 
\subsubsection{Setting for budget and cost} We choose the integers uniformly at random from the interval $[50,100]$ for assigning the node to its selection cost. For the budget, we initially start with the budget value of $2000$ and sequentially, add $2000$ to each iteration and continued till $16000$. This kind of setting has been previously adopted to study the BIM problem \cite{nguyen2013budgeted}.
\subsubsection{Setting for delay distribution and diffusion time} 
For assigning the delay in successive rounds, we use the \textit{Poisson Distribution}. For each $u_i$, the parameter for the distribution is randomly selected from the set $\{1, 2, \dots, 20\}$. We consider the maximum diffusion time as $10$ units. This setting has been adopted in \cite{liu2014influence} to study the influence maximization problem with delay.
\subsection{Algorithms in the Experimentation} \label{Algo:Comp}
For the clarity of understanding, here, we briefly mention the algorithms that are their in the experimentation.
\subsubsection{Algorithms Proposed in this Paper}
\begin{itemize}
	\item \textbf{Algorithm \ref{Algo:2}}: This is the incremental greedy procedure, which computes the marginal influence gain as stated in Theorem \ref{Th:1}.
	\item \textbf{Algorithm \ref{Algo:3}a}: This is basically the improvisation of Algorithm \ref{Algo:2} by exploiting the sub\mbox{-}modularity property of the influence function. Here, the marginal gain is computed using the normal influence estimation process.
	\item \textbf{Algorithm \ref{Algo:3}b}: This is a modified version of the  Algorithm \ref{Algo:3}, where the marginal gain is computed applying Theorem \ref{Th:1}.
	\item \textbf{Algorithm \ref{Algo:4}}: This improves Algorithm \ref{Algo:3} by exploiting the sub\mbox{-}modularity property of the time delayed influence function.
\end{itemize}
Here we highlight that we do not include Algorithm \ref{Algo:Intutive} in our experiments as it is highly inefficient.

\subsubsection{Baseline Algorithms}
We compare the performance of our methodology in influence spread with four baseline methods from the literature. Here, we give a very brief introduction to the methods.
\begin{itemize}
	\item \textbf{Maximum Degree Heuristic (DEG)}: This method iteratively selects the high degree nodes within the budget. In many previous studies, this method has been used as a baseline method, such as \cite{goyal2011simpath}, \cite{mohammadi2015time}.
	\item \textbf{Degree Discount Heuristic (DDH)}: This is a popular heuristic for the SIM problem proposed by \cite{chen2009efficient}. In this heuristic, if $u$ is a seed node and $(uv) \in E(\mathcal{G})$, then the degree of $v$ will be discounted by $2t_{v}+(d_{v}-t_{v})t_{v}\mathcal{P}_{u \rightarrow v}$, where $t_{v}$ is the number of neighbors of $v$ currently in the seed set, $d_{v}$ is the degree of $v$. This method has been used in  previous studies \cite{jiang2011simulated}.
	\item \textbf{Single Discount Heuristic (SDH)}: This a variant of degree discount heuristic proposed by \cite{chen2009efficient}. In this heuristic, if $u$ is a seed node and $(uv) \in E(\mathcal{G})$, then the degree of $v$ will be discounted by $1$. This method has been used as a baseline method in many previous studies \cite{cao2011maximizing} \cite{jiang2011simulated}.
	\item \textbf{Influence Ranking and Influence Estimation (IRIE)}: This is a popular heuristic for solving the influence maximization problem under IC Model proposed by \cite{jung2012irie}. This method has been used as a baseline in many previous studies on influence maximization \cite{lee2015query} \cite{lee2014fast}. 
\end{itemize}
All the algorithms are implemented on \textit{Python 3.4} environment along with \textit{NetworkX 1.9.1} package in a system with $5$ nodes and each node has $32$ cores and $64$ GB of RAM running by a Centos $6.7$ environment. For computing the influence spread due to the seed sets selected by different Algorithms, we consider the value of $\mathcal{R}$ as $10000$.
\subsection{Experimental Results}
The main goal of the experimental study is to compare the performance of the methodologies (both proposed and baseline) briefed in Section \ref{Algo:Comp}. Here, our focus is on two aspects. The main performance measure is the quality of the seed set selected by an algorithm, and this is measured by the expected number of influenced nodes. The secondary performance measure is its efficiency, i.e., the amount of computational time required for locating the seed nodes in the network. First, we report the performance based on influence spread and next, we describe the computational time required for seed set selection.
\subsubsection{Performance on Influence Spread}
Figure \ref{Influ:Email} shows the expected influence spread due to seed sets selected by different algorithms for the \textit{Email-Eu-core network}. From  Figure \ref{Influ:Email}a, it is clearly observed that there is a significant gap in the expected influence spread between the baseline methods and that proposed in this paper. As an example for $\mathcal{B}=16000$, among the baseline methods, IRIE has the highest spread, which is $871$. On the other hand, with the same budget Algorithm \ref{Algo:2} can achieve the spread of $977$. Hence, there is a gap of $10.54 \%$ with respect to the number of nodes of the network.  This is due to the following reason. In the proposed algorithms, the seed set is selected in an incrementally greedy manner, where in each iteration, the node causes the maximum marginal gain to be put into the seed set and hence, at the termination, these algorithms return one of the best quality seed sets with respect to the influence spread ability. On the other hand, the maximum degree heuristic (DEG) and its other two counter parts, namely single discount and degree discount heuristics are basically centrality\mbox{-}based methods and hence, there is a high chance that many of the highly central nodes are clustered in a localized zone and there is a significant overlap between the influence zones of two or more seed nodes, and this cause these methods less effective. However, as single discount and degree discount heuristics put some restrictions to avoid two adjacent nodes in seed set, hence, its performance is found to be quite better than that of the maximum degree heuristic. On the other hand IRIE selects seed nodes based on influence rank calculation. Hence, it generates more number of influenced nodes compared the centrality based heuristics.
\begin{figure}[!ht]
	\centering
	\begin{tabular}{cc}
		\includegraphics[scale=0.34]{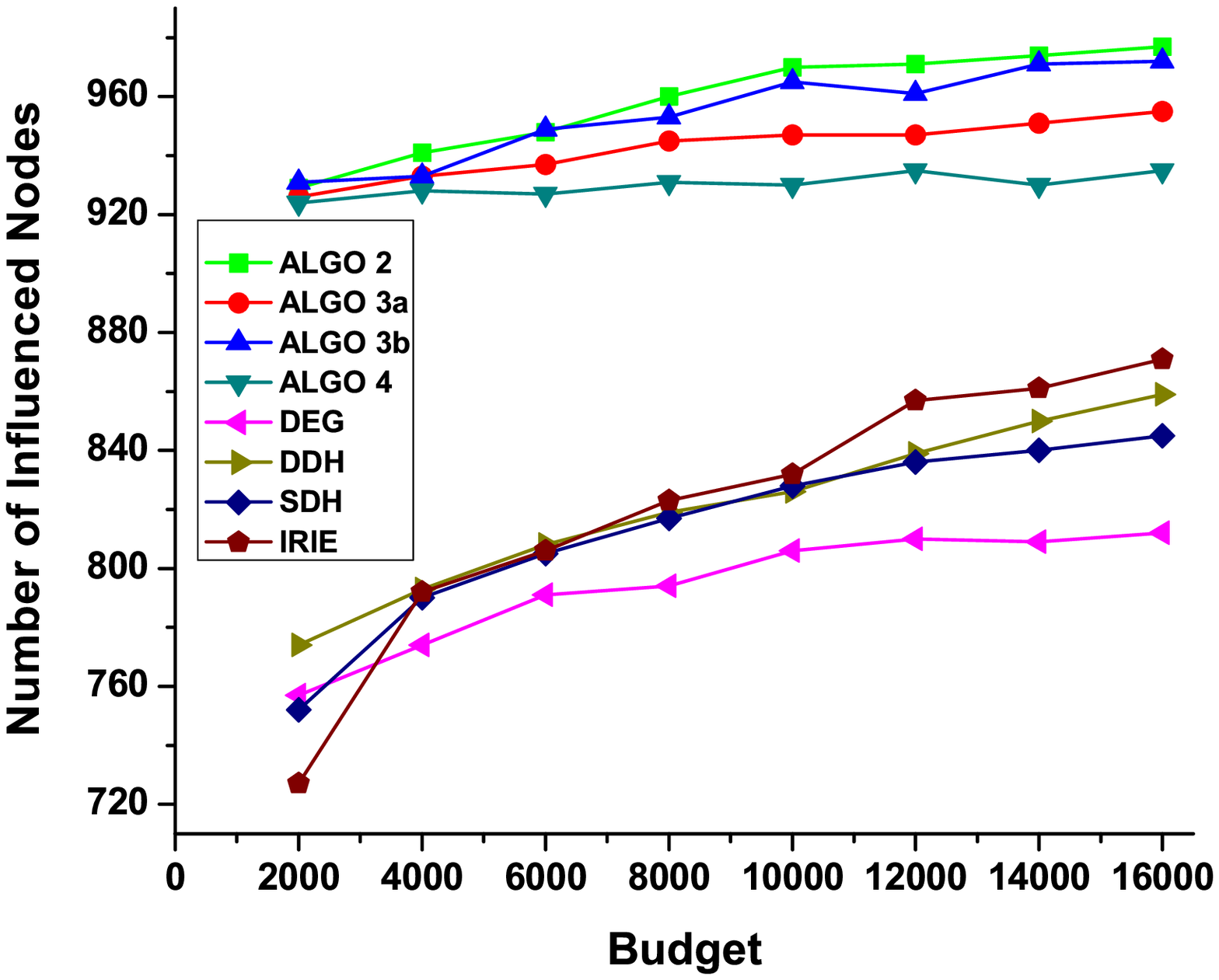} & \includegraphics[scale=0.34]{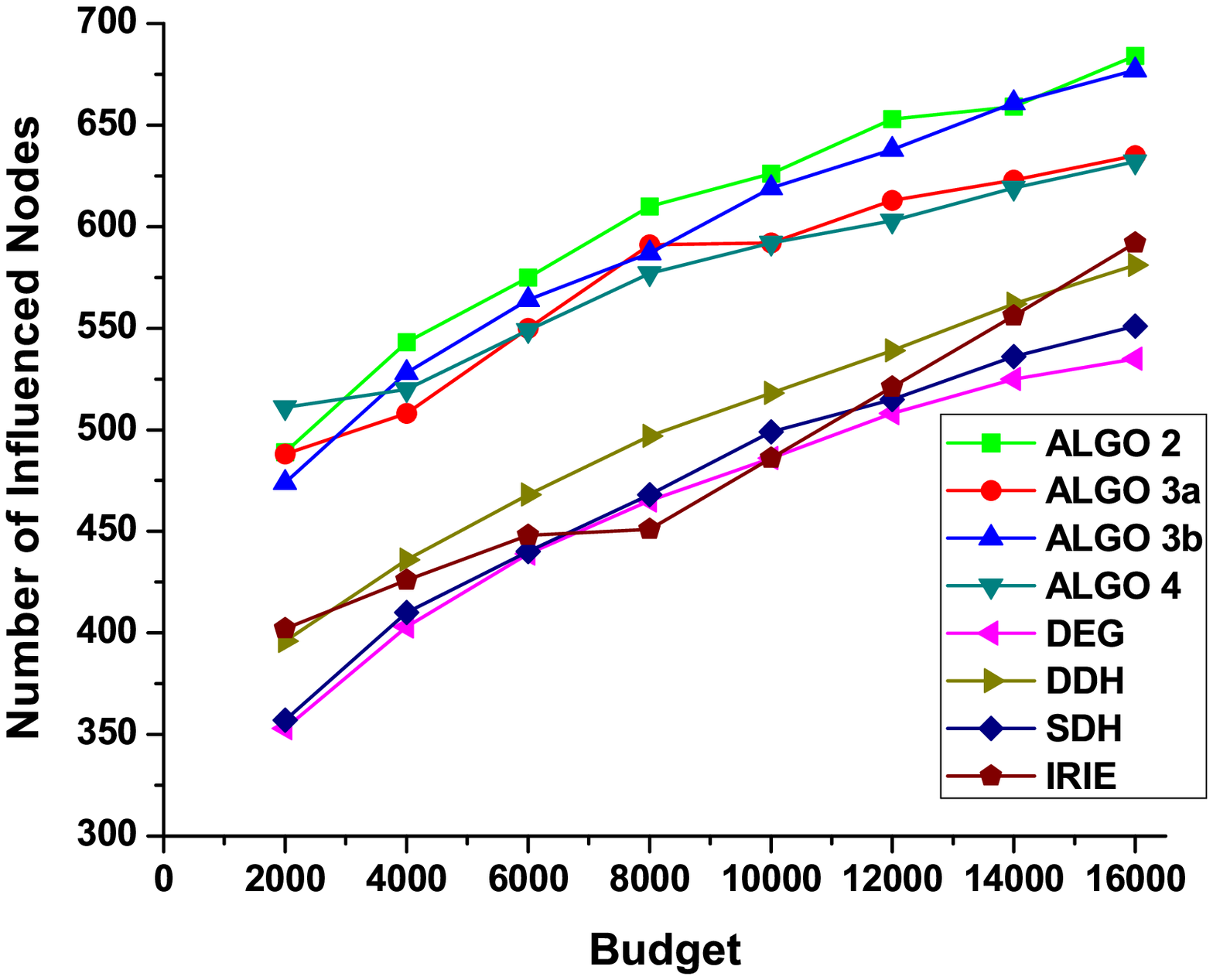} \\
		(a) Uniform (with $p_{c}=0.1$)  & (b) Trivalancy\\
	\end{tabular}
	\caption{Expected Number of Influenced Nodes for the \textbf{Email-Eu-Core Dataset} in Uniform and Trivalancy Settings due to the Seed Set Selected by Different Algorithms.}
	\label{Influ:Email}
\end{figure}
\par From the Figure \ref{Influ:Email}b, it is observed, that in tri\mbox{-}valency setting also, the seed selected by the proposed methods leads to more number of influenced nodes compared to the baseline methods. As an example, for $\mathcal{B}=16000$, the maximum number of influenced nodes due to Algorithm \ref{Algo:4} are 511, which is $50.85\%$ of the total nodes present in the network and also, this is found to be $44\%$, $29\%$, $43\%$, and $27\%$ more compared to DEG, DDH, SDH, and IRIE, respectively.
\par Next, we investigate the performance on influence spread for the Facebook Network dataset. Figure \ref{Influ:Facebook} shows the number of influenced nodes for different budget values on Facebook dataset. In this case also, we observe that, seed sets selected by the proposed methodologies lead to more number of influenced nodes compared to that of the baseline methods. As an example, for $\mathcal{B}=16000$, in uniform settings with $p_{c}=0.1$, the number of influenced nodes is $1782$, which is approximately $44\%$ of the total number of nodes and almost double compared to the number of influenced nodes due to the seed set selected by the IRIE Algorithm. In trivalancy setting, the number of influenced nodes due to the seed set chosen by Algorithm \ref{Algo:3}a is $923$, which is almost $23\%$ of the number of nodes and almost $55\%$ more compared to that of IRIE. 
\begin{figure}[!ht]
	\centering
	\begin{tabular}{cc}
		\includegraphics[scale=0.34]{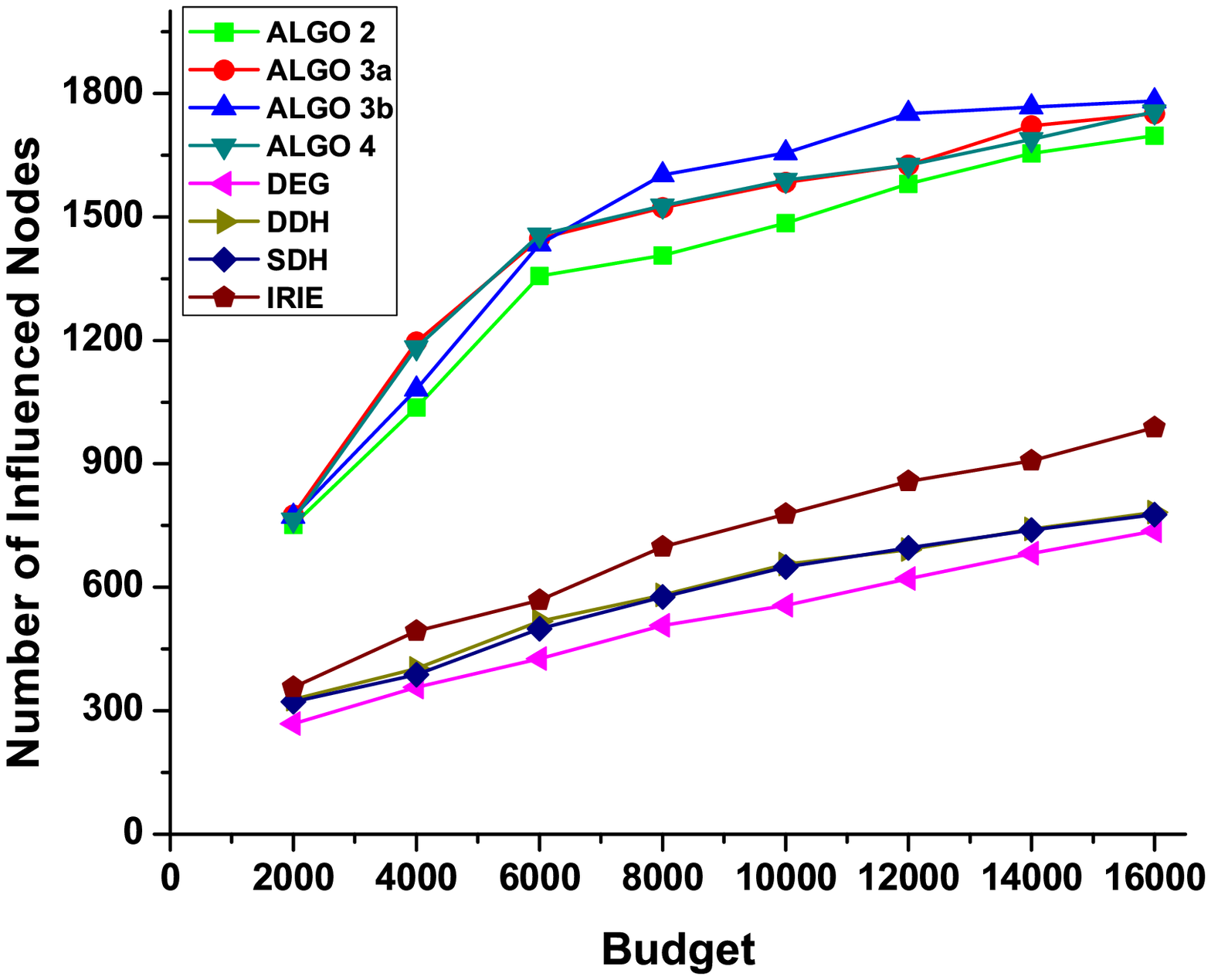} & \includegraphics[scale=0.34]{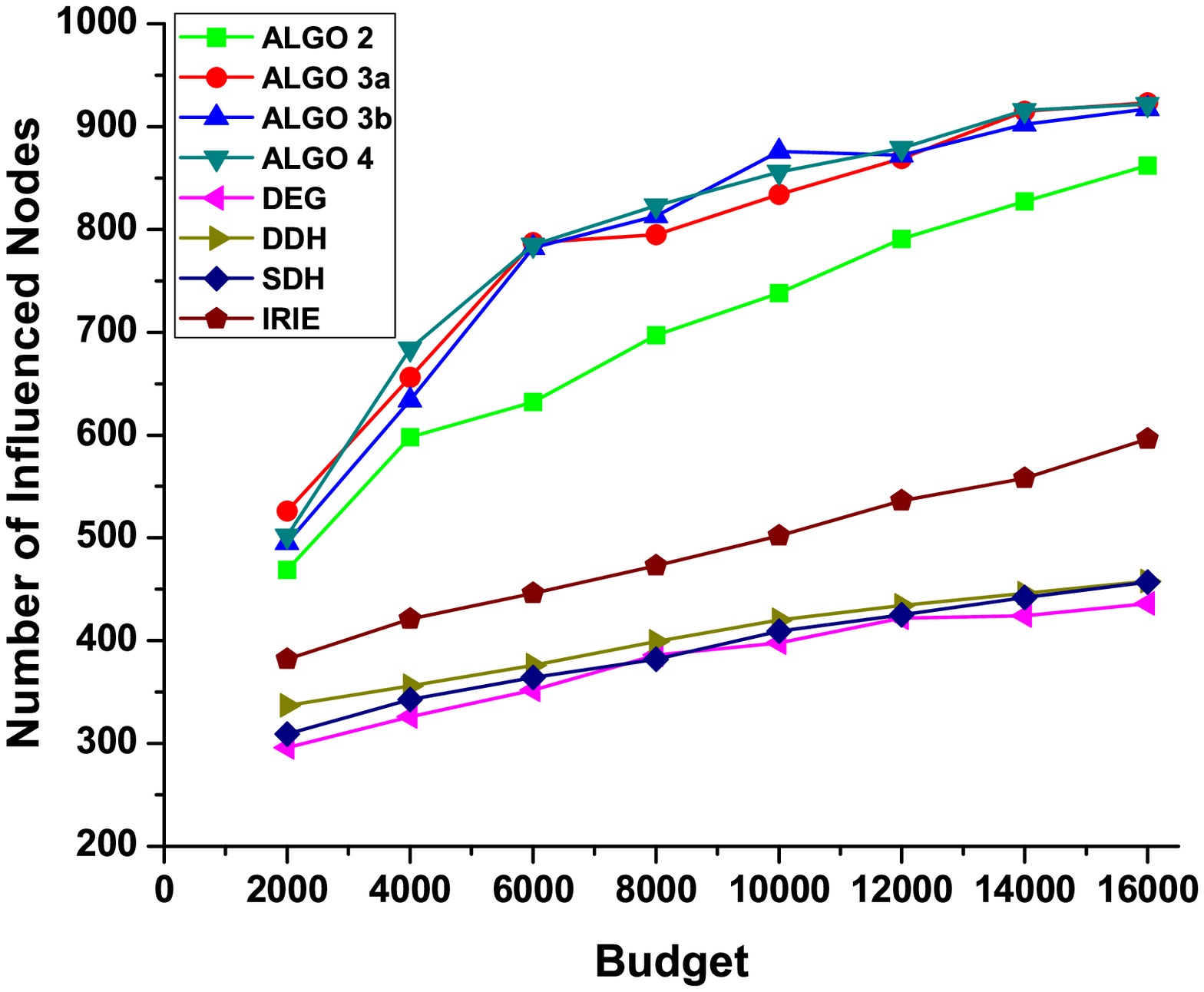} \\
		(a) Uniform (with $p_{c}=0.1$)  & (b) Trivalancy\\
	\end{tabular}
	\caption{Expected Number of Influenced Nodes for the \textbf{Facebook Dataset} in Uniform and Trivalancy Settings due to the Seed Set Selected by Different Algorithms.}
	\label{Influ:Facebook}
\end{figure}
\par Lastly, we investigate the performance of different algorithms on influence spread on Physics Collaboration Network dataset. Figure \ref{Influ:Phy} presents the influence spread due to the seed set selected by different algorithms for the  different budgets. From the Figure \ref{Influ:Phy}a, except for the lower budgets ($2000$ and $4000$), in this dataset also, we observe a significant difference in the expected influence spread by baseline methods and that proposed in this paper. As an example, for $\mathcal{B}=16,000$ among the baseline methods, the highest spread is achieved by IRIE and the number of influenced nodes in this case is found to be equal to $7446$ and among the proposed methods, the highest spread is achieved by Algorithm \ref{Algo:3}a leading to $11657$ number of influenced nodes. In comparison, this quantity is $50\%$ more compared to that of the baseline method. On the other hand, for the trivalancy model, the gap is only observed for higher budgets (greater than $8000$) and also seen to be significantly low compared to uniform setting. As an example, for $\mathcal{B}=16000$, the number of influenced nodes by the best baseline method, IRIE is found to be $2435$ and number of influenced nodes by Algorithm \ref{Algo:4} is $2776$, which is $14\%$ more. One possible explanation for this fact can be, for this dataset, the high degree nodes may be quite uniformly spreaded across the network. When the budget is low, naturally the less number of seed set can be selected. As the number of seed nodes is less, they are uniformly spreaded and overlapping zones of the seed nodes are also less. Contrary, when the budget is high, the number of selected seed nodes is also more in number. There is a possibility that some seed nodes may clustered into a particular zone of the network and this causes a significant overlap in the influence zone of these seed nodes. Hence, for the  higher budgets, the performance of DEG (consequently DDH and SDH) method on influence spread is quite poor.
\begin{figure}[!ht]
	\centering
	\begin{tabular}{cc}
		\includegraphics[scale=0.34]{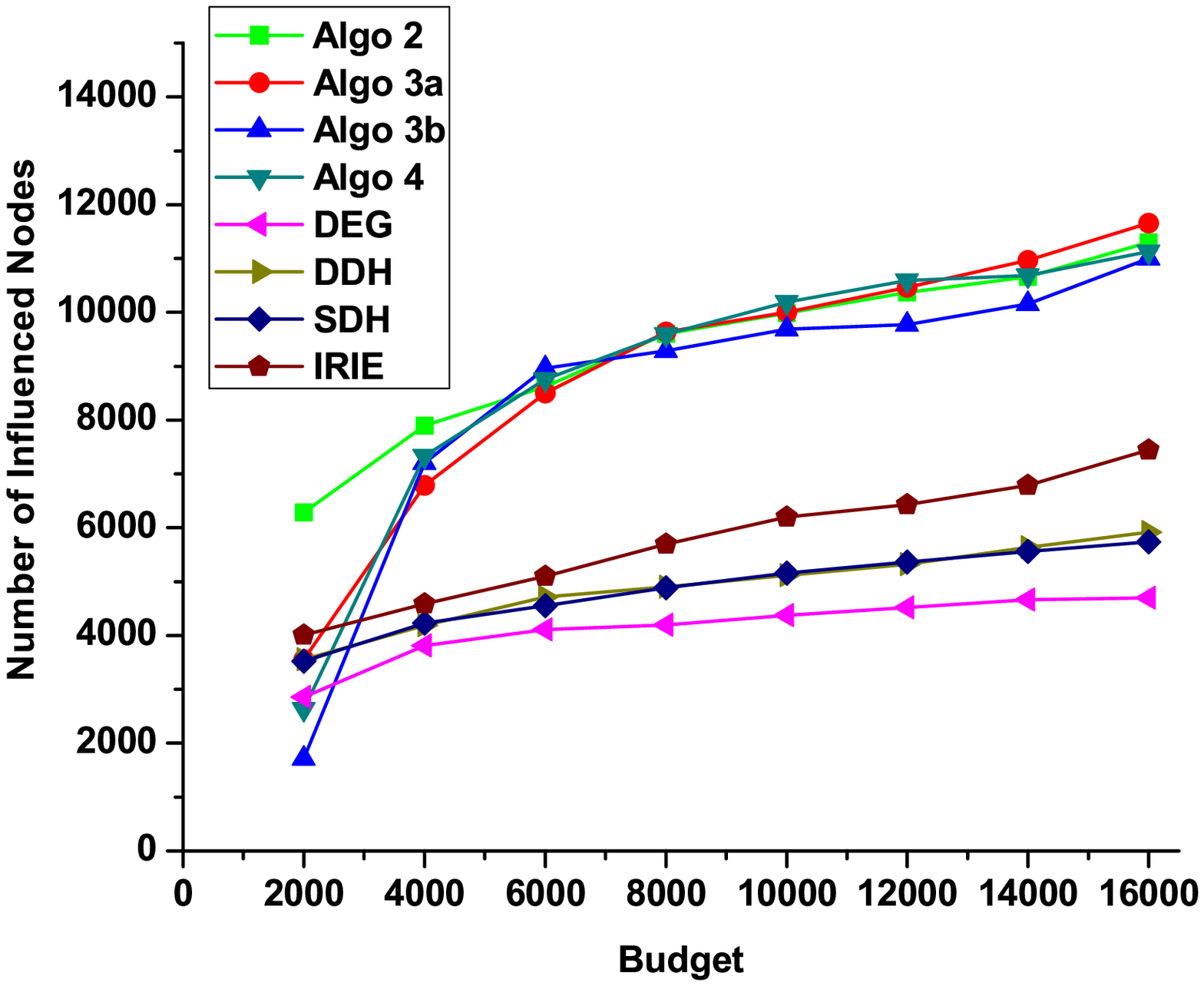} & \includegraphics[scale=0.34]{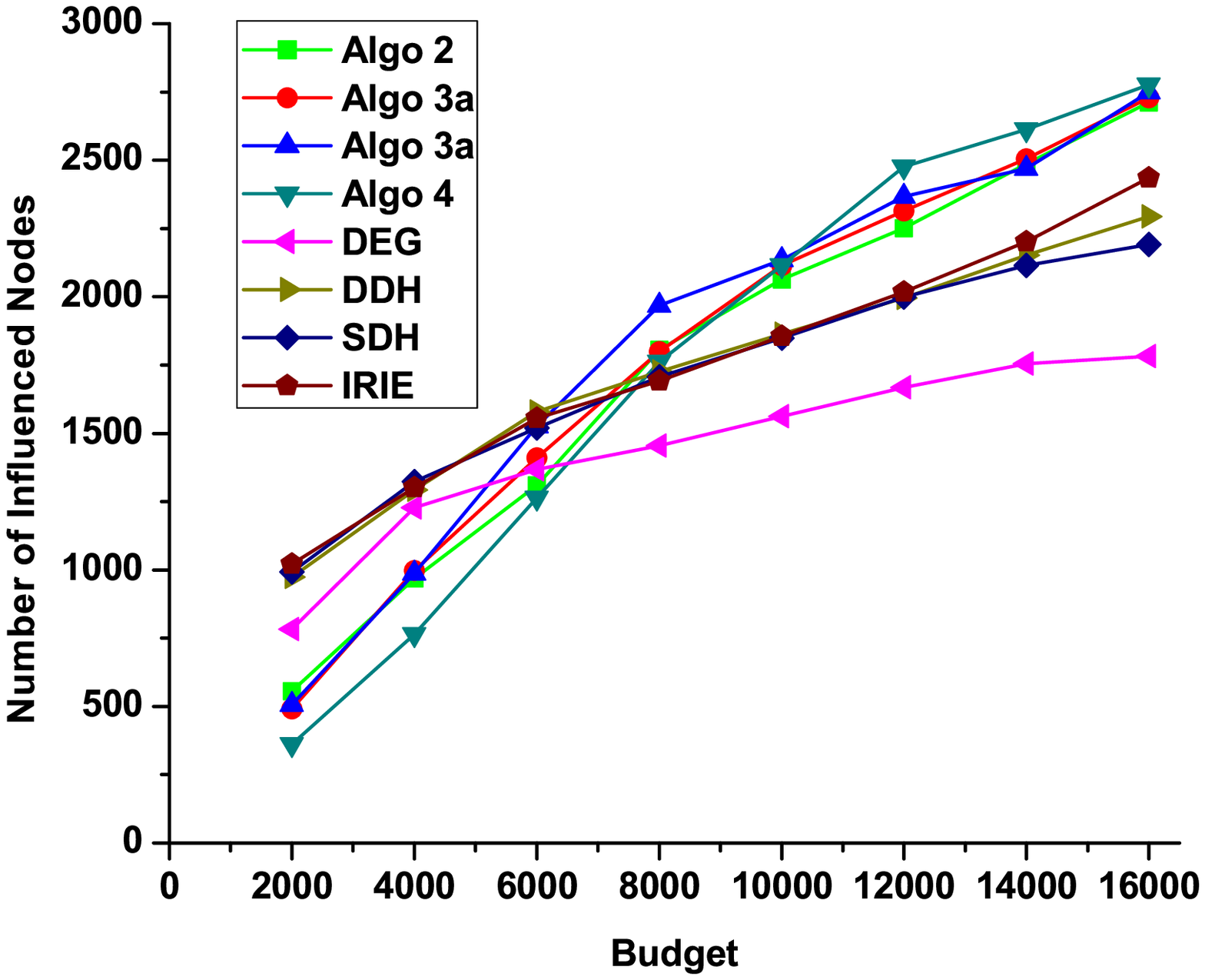} \\
		(a) Uniform (with $p_{c}=0.1$)  & (b) Trivalancy\\
	\end{tabular}
	\caption{Expected Number of Influenced Nodes for the \textbf{Physics Collaboration Network Dataset} in Uniform and Trivalancy Settings due to the Seed Set Selected by Different Algorithms.}
	\label{Influ:Phy}
\end{figure}
\subsubsection{Performance on Computational Time}
Here, we report the computational time for seed set selection. One point worthwhile to mention is that for the DEG, DDH and SDH heuristics the seed selection time is not dependent on the underlying diffusion probability setting. On the other hand, as the proposed methodologies and the IRIE compute influence spread during seed set selection, hence, underlying diffusion model accounts in computational time required for selecting the seed set. Therefor, for the proposed methodologies and the IRIE, we report the computational time, which is found to be maximum of uniform (with $p_{c}=0.1$) and trivalancy. Table \ref{Tab:Time_Email} reports the time requirement by different algorithms for finding the seed sets in \textit{Email-Eu-core network} dataset. As the DEG method just computes the degree and returns the high degree nodes, this is the fastest one. Along with computing the degree, DDH and SDH methods need to perform extra operations and therefore, these two methods take a little more time compared to the DEG method. However, the IRIE method is taking quite a long time because, during the rank calculation, a recursive procedure is followed, which is consuming a lot of time. 
\begin{center}
	\begin{table}[H]
		\caption{Computational time requirement (in Secs.) for seed set selection for \textbf{Email-Eu-core network} dataset}
		\begin{tabular}{  c  c  c  c  c  c  c  c  c }
			\hline
			Budget & 2000 & 4000  & 6000 & 8000 & 10000 & 12000 & 14000 & 16000 \\ \hline
			Algo \ref{Algo:2}  & 21.51 & 77.95 & 173.35  & 303.85 & 459.81 & 676.03 & 921.84 & 1184.73 \\ \hline
			Algo \ref{Algo:3}a & 1.86  & 2.15  & 2.53 & 2.96  & 3.47 & 4.03 & 4.66 & 5.29 \\  \hline
			Algo \ref{Algo:3}b & 25.68  & 90.19  & 186.20  & 305.54  & 464.64 & 635.29 & 859.56 & 1114.04 \\  \hline
			Algo \ref{Algo:4}  & 14.09  & 12.80  & 11.57  & 15.39  & 17.81 & 17.12 & 19.25 & 18.40 \\  \hline
			DEG   & 0.15  & 0.15   & 0.17  & 0.20  & 0.20 & 0.23 & 0.25 & 0.32 \\  \hline
			DDH   & 0.09  & 0.12   & 0.19  & 0.31  & 0.31 & 0.37 & 0.42 & 0.45 \\  \hline
			SDH   & 0.09  & 0.15   & 0.21  & 0.33  & 0.26 & 0.28 & 0.39 & 0.45 \\  \hline
			IRIE   & 62.68  & 138.46   & 226.31  & 392.67  & 426.12 & 459.51 & 502.12 & 535.15  \\  
			\hline
		\end{tabular}
		\label{Tab:Time_Email}
	\end{table}
\end{center} 
From this result, it is also clarified that the proposed approach developed based on the approximate marginal gain computation (Algorithm \ref{Algo:2}) is faster than IRIE for the lower budget values. With the increase of budget, this approach takes more time compared to rest of the methodologies. However, Algorithm \ref{Algo:3}a drastically improves the computational time of Algorithm \ref{Algo:2} by removing a lot of unnecessary computations. This observation is consistent with the intuition, as mentioned previously. As an example, when $\mathcal{B}=16000$, Algorithm \ref{Algo:3}a is more than $220$ times faster than Algorithm \ref{Algo:2}. In case of Algorithm \ref{Algo:3}b, as the gain is computed using simulation\mbox{-}based approach, it takes more time compared to Algorithm \ref{Algo:3}a. Algorithm \ref{Algo:4} tries to improve over the Algorithm \ref{Algo:3}a. However, this is not reflected in the result due to the following reason. To take the advantage of Algorithm \ref{Algo:4}, initially we need to evaluate the influence function in Lines $9$ and $11$, however, in Algorithms \ref{Algo:3}a and \ref{Algo:3}b, it is done once in Line 2. As the size of the \textit{Email-Eu-core network} dataset is small (consisting of $1005$ nodes and $25571$ edges), the improvement in time is not detected. Algorithm \ref{Algo:4} improves the computational time of Algorithm \ref{Algo:2} in a significant way. As an example, for $\mathcal{B}=16000$, Algorithm \ref{Algo:4} is almost $65$ times faster than Algorithm \ref{Algo:2}. Also, this observation is consistent, as stated previously.  
\par Now, we report computational time required for Facebook dataset in Table \ref{Tab:Facebook}. In this dataset also, the computational time requirement of different algorithms are found to be consistent with our intuitions. Particularly, for the higher budgets, Algorithm \ref{Algo:4} takes less amount of computational time compared to Algorithm \ref{Algo:3}a.
\begin{center}
	\begin{table}[H]
		
		\caption{Computational time requirement (in Secs.) for seed set selection for \textbf{Facebook} dataset}
		
		\begin{tabular}{  c  c  c  c  c  c  c  c  c }
			\hline
			Budget & 2000 & 4000  & 6000 & 8000 & 10000 & 12000 & 14000 & 16000 \\ \hline
			Algo \ref{Algo:2}  & 232.51 & 497.15 & 659.17  & 936.87 & 1327.71 & 1596.42 & 1846.63 & 1996.24 \\ \hline
			Algo \ref{Algo:3}a & 14.21  & 18.95  & 21.89 & 28.95  & 31.23 & 30.76 & 35.51 & 41. 98 \\  \hline
			Algo \ref{Algo:3}b & 198.21  & 256.89  & 392.67  & 478.34  & 535.87 & 566.45 & 598.66 & 634.04 \\  \hline
			Algo \ref{Algo:4}  & 22.21  & 22.01  & 23.56  & 24.92  & 26.78 & 29.82 & 33.81 & 33.26 \\  \hline
			DEG   & 2.16  & 2.17   & 2.59  & 2.48  & 3.01 & 2.93 & 3.15 & 3.12 \\  \hline
			DDH   & 2.52  & 2.69  & 2.93  & 2.67  & 3.01 & 3.17 & 3.42 & 3.39 \\  \hline
			SDH   & 2.51  & 2.57   & 2.52  & 2.59  & 2.62 & 2.81 & 2.79 & 2.91 \\  \hline
			IRIE   & 180.26  & 252.64  & 335.31  & 446.29  & 502.21 & 595.14 & 656.12 & 702.15  \\  
			\hline
		\end{tabular}
		\label{Tab:Facebook}
	\end{table}
\end{center} 
Now, we report the required computational time for seed set selection on Physics Collaboration Network dataset in Table \ref{Tab:Physics}. In this dataset, for the lower budget values, the efficiency of Algorithm \ref{Algo:3}a is not very significant. However, when the budget value becomes more than $10,000$, Algorithm \ref{Algo:3}a takes less time compared to Algorithm \ref{Algo:2}. One important fact to observe is that as this dataset is quite large, Algorithm \ref{Algo:4} is taking less time compared to Algorithm \ref{Algo:3}a. Hence, the efficiency of Algorithm \ref{Algo:4} is noticeable here. It can be also be observed that for the Algorithms \ref{Algo:3}a, \ref{Algo:3}b and \ref{Algo:4}, computational time remains almost the same for the varying budget values.
\begin{center}
	\begin{table}[H]
		\caption{Computational time requirement (in Secs.) for seed set selection for \textbf{Physics Collaboration Network} dataset}
		\begin{tabular}{  c  c  c  c  c  c  c  c  c }
			\hline
			Budget & 2000 & 4000  & 6000 & 8000 & 10000 & 12000 & 14000 & 16000 \\ \hline
			Algo \ref{Algo:2}  & 2146.33 & 2962.28 & 3752.72  & 5382.98 & 8540.09 & 14489.40 & 14594.12 & 17934.91 \\ \hline
			Algo \ref{Algo:3}a & 5806.96  & 5800.62  & 5608.59 & 5674.62  & 6171.75 & 6193.59 & 6081.62 & 5769.38 \\  \hline
			Algo \ref{Algo:3}b & 6206.69  & 6512.43  & 6526.39  & 6335.41  & 6221.57 & 6325.21 & 6317.92 & 6301.04 \\  \hline
			Algo \ref{Algo:4}  & 5193.91  & 5199.29  & 5206.20  & 5215.30  & 5225.46 & 5237.41 & 5251.82 & 5267.78 \\  \hline
			DEG   & 1.81  & 4.02   & 5.00  & 5.89  & 6.95 & 6.73 & 11.67 & 10.56 \\  \hline
			DDH   & 2.26  & 13.11  & 13.57  & 11.86  & 9.97 & 9.47 & 4.82 & 5.53 \\  \hline
			SDH   & 1.97  & 11.67   & 3.29  & 10.15  & 2.02 & 10.89 & 1.52 & 1.60 \\  \hline
			IRIE   & 62.68  & 151.46   & 401.51  & 562.67  & 791.12 & 906.51 & 1056.12 & 1236.15  \\  
			\hline
		\end{tabular}
		\label{Tab:Physics}
	\end{table}
\end{center} 
The important points to mention here is that, in real information diffusion scenarios, the most important thing from the advertiser’s perspective is the number of influenced nodes. At this point, the proposed methodologies are far ahead compared to the baseline methods. Secondly, as the size of the dataset
increases, the scalability of the Algorithm \ref{Algo:3}a, \ref{Algo:3}b and \ref{Algo:4} with respect to the budget also increase. These show the effectiveness and efficiency of the proposed methodologies.
\section{Conclusions and Future Directions} \label{Sec:CFD}
In this paper, we have introduced the problem of Budgeted Influence Maximization with delay by considering intermediate delay and time\mbox{-}sensitive nature of many real\mbox{-}life diffusion processes. For this problem, we have proposed an incremental greedy methodology, which works based on the approximate marginal spread computation. Time and space requirement analysis of this method has been done and these are found to be  linear with respect to the number of nodes and edges of the network. To deal with the larger datasets, we have improved the efficiency of this method by exploiting the sub\mbox{-}modularity property of the time delayed influence function. Experimentation with real\mbox{-}world social network datasets demonstrates that the proposed methodologies will be able to choose the seed nodes within feasible computational time, that leads to more number of influenced nodes compared to the baseline methods. Now, this work can be extended in several directions. One immediate extension of this study is to consider the presence of competitor and particularly a game theoretic model will be interesting.

%
%

\bibliographystyle{spbasic}      
\bibliography{Paper}   

\begin{thebibliography}{35}
\providecommand{\natexlab}[1]{#1}
\providecommand{\url}[1]{{#1}}
\providecommand{\urlprefix}{URL }
\expandafter\ifx\csname urlstyle\endcsname\relax
  \providecommand{\doi}[1]{DOI~\discretionary{}{}{}#1}\else
  \providecommand{\doi}{DOI~\discretionary{}{}{}\begingroup
  \urlstyle{rm}\Url}\fi
\providecommand{\eprint}[2][]{\url{#2}}

\bibitem[{Arora et~al.(2017)Arora, Galhotra, and Ranu}]{arora2017debunking}
Arora A, Galhotra S, Ranu S (2017) Debunking the myths of influence
  maximization: An in-depth benchmarking study. In: Proceedings of the 2017 ACM
  International Conference on Management of Data, ACM, pp 651--666

\bibitem[{Banerjee et~al.(2019)Banerjee, Jenamani, and
  Pratihar}]{banerjee2019combim}
Banerjee S, Jenamani M, Pratihar DK (2019) Combim: A community-based solution
  approach for the budgeted influence maximization problem. Expert Systems with
  Applications 125:1--13

\bibitem[{Banerjee et~al.(2020)Banerjee, Jenamani, and
  Pratihar}]{banerjee2017survey}
Banerjee S, Jenamani M, Pratihar DK (2020) A survey on influence maximization
  in a social network. Knowledge and Information Systems pp 1--39

\bibitem[{Cao et~al.(2011)Cao, Wu, Wang, and Hu}]{cao2011maximizing}
Cao T, Wu X, Wang S, Hu X (2011) Maximizing influence spread in modular social
  networks by optimal resource allocation. Expert Systems with Applications
  38(10):13128--13135

\bibitem[{Chen et~al.(2009)Chen, Wang, and Yang}]{chen2009efficient}
Chen W, Wang Y, Yang S (2009) Efficient influence maximization in social
  networks. In: Proceedings of the 15th ACM SIGKDD international conference on
  Knowledge discovery and data mining, ACM, pp 199--208

\bibitem[{Chen et~al.(2010)Chen, Wang, and Wang}]{chen2010scalable}
Chen W, Wang C, Wang Y (2010) Scalable influence maximization for prevalent
  viral marketing in large-scale social networks. In: Proceedings of the 16th
  ACM SIGKDD international conference on Knowledge discovery and data mining,
  ACM, pp 1029--1038

\bibitem[{Chen et~al.(2012)Chen, Lu, and Zhang}]{chen2012time}
Chen W, Lu W, Zhang N (2012) Time-critical influence maximization in social
  networks with time-delayed diffusion process. In: AAAI, vol 2012, pp 1--5

\bibitem[{Domingos and Richardson(2001)}]{domingos2001mining}
Domingos P, Richardson M (2001) Mining the network value of customers. In:
  Proceedings of the seventh ACM SIGKDD international conference on Knowledge
  discovery and data mining, ACM, pp 57--66

\bibitem[{Goyal et~al.(2011)Goyal, Lu, and Lakshmanan}]{goyal2011simpath}
Goyal A, Lu W, Lakshmanan LV (2011) Simpath: An efficient algorithm for
  influence maximization under the linear threshold model. In: Data Mining
  (ICDM), 2011 IEEE 11th International Conference on, IEEE, pp 211--220

\bibitem[{Guille et~al.(2013)Guille, Hacid, Favre, and
  Zighed}]{guille2013information}
Guille A, Hacid H, Favre C, Zighed DA (2013) Information diffusion in online
  social networks: A survey. ACM Sigmod Record 42(2):17--28

\bibitem[{Han et~al.(2014)Han, Zhuang, He, and Shi}]{han2014balanced}
Han S, Zhuang F, He Q, Shi Z (2014) Balanced seed selection for budgeted
  influence maximization in social networks. In: Pacific-Asia Conference on
  Knowledge Discovery and Data Mining, Springer, pp 65--77

\bibitem[{Jiang et~al.(2011)Jiang, Song, Cong, Wang, Si, and
  Xie}]{jiang2011simulated}
Jiang Q, Song G, Cong G, Wang Y, Si W, Xie K (2011) Simulated annealing based
  influence maximization in social networks. In: AAAI, vol~11, pp 127--132

\bibitem[{Jung et~al.(2012)Jung, Heo, and Chen}]{jung2012irie}
Jung K, Heo W, Chen W (2012) Irie: Scalable and robust influence maximization
  in social networks. In: Data Mining (ICDM), 2012 IEEE 12th International
  Conference on, IEEE, pp 918--923

\bibitem[{Kempe et~al.(2003)Kempe, Kleinberg, and Tardos}]{kempe2003maximizing}
Kempe D, Kleinberg J, Tardos {\'E} (2003) Maximizing the spread of influence
  through a social network. In: Proceedings of the ninth ACM SIGKDD
  international conference on Knowledge discovery and data mining, ACM, pp
  137--146

\bibitem[{Lee and Chung(2014)}]{lee2014fast}
Lee JR, Chung CW (2014) A fast approximation for influence maximization in
  large social networks. In: Proceedings of the 23rd International Conference
  on World Wide Web, ACM, pp 1157--1162

\bibitem[{Lee and Chung(2015)}]{lee2015query}
Lee JR, Chung CW (2015) A query approach for influence maximization on specific
  users in social networks. IEEE Transactions on Knowledge \& Data Engineering
  (1):1--1

\bibitem[{Leskovec and Mcauley(2012)}]{leskovec2012learning}
Leskovec J, Mcauley JJ (2012) Learning to discover social circles in ego
  networks. In: Advances in neural information processing systems, pp 539--547

\bibitem[{Leskovec et~al.(2007{\natexlab{a}})Leskovec, Kleinberg, and
  Faloutsos}]{leskovec2007graph}
Leskovec J, Kleinberg J, Faloutsos C (2007{\natexlab{a}}) Graph evolution:
  Densification and shrinking diameters. ACM Transactions on Knowledge
  Discovery from Data (TKDD) 1(1):2

\bibitem[{Leskovec et~al.(2007{\natexlab{b}})Leskovec, Krause, Guestrin,
  Faloutsos, VanBriesen, and Glance}]{leskovec2007cost}
Leskovec J, Krause A, Guestrin C, Faloutsos C, VanBriesen J, Glance N
  (2007{\natexlab{b}}) Cost-effective outbreak detection in networks. In:
  Proceedings of the 13th ACM SIGKDD international conference on Knowledge
  discovery and data mining, ACM, pp 420--429

\bibitem[{Li et~al.(2017)Li, Pan, and Wu}]{li2017dominated}
Li H, Pan L, Wu P (2017) Dominated competitive influence maximization with
  time-critical and time-delayed diffusion in social networks. Journal of
  Computational Science

\bibitem[{Liu et~al.(2014)Liu, Cong, Zeng, Xu, and Chee}]{liu2014influence}
Liu B, Cong G, Zeng Y, Xu D, Chee YM (2014) Influence spreading path and its
  application to the time constrained social influence maximization problem and
  beyond. IEEE Transactions on Knowledge and Data Engineering 26(8):1904--1917

\bibitem[{Mohammadi et~al.(2015)Mohammadi, Saraee, and
  Mirzaei}]{mohammadi2015time}
Mohammadi A, Saraee M, Mirzaei A (2015) Time-sensitive influence maximization
  in social networks. Journal of Information Science 41(6):765--778

\bibitem[{Nguyen and Zheng(2013)}]{nguyen2013budgeted}
Nguyen H, Zheng R (2013) On budgeted influence maximization in social networks.
  IEEE Journal on Selected Areas in Communications 31(6):1084--1094

\bibitem[{Pal and Jenamani(2019)}]{pal2019trust}
Pal B, Jenamani M (2019) Trust inference using implicit influence and projected
  user network for item recommendation. Journal of Intelligent Information
  Systems 52(2):425--450

\bibitem[{Quan et~al.(2018)Quan, Jia, Zhou, Han, and Li}]{quan2018repost}
Quan Y, Jia Y, Zhou B, Han W, Li S (2018) Repost prediction incorporating
  time-sensitive mutual influence in social networks. Journal of Computational
  Science

\bibitem[{Rui et~al.(2020)Rui, Yang, Fan, and Wang}]{rui2020neighbour}
Rui X, Yang X, Fan J, Wang Z (2020) A neighbour scale fixed approach for
  influence maximization in social networks. Computing pp 1--23

\bibitem[{Shi et~al.(2016)Shi, Cheng, Cai, Li, and Li}]{shi2016retrieving}
Shi T, Cheng S, Cai Z, Li Y, Li J (2016) Retrieving the maximal time-bounded
  positive influence set from social networks. Personal and Ubiquitous
  Computing 20(5):717--730

\bibitem[{Song et~al.(2006)Song, Tseng, Lin, and Sun}]{song2006personalized}
Song X, Tseng BL, Lin CY, Sun MT (2006) Personalized recommendation driven by
  information flow. In: Proceedings of the 29th annual international ACM SIGIR
  conference on Research and development in information retrieval, ACM, pp
  509--516

\bibitem[{Swetha and Datla(2017)}]{swetha2017identification}
Swetha G, Datla R (2017) Identification of influential instances in temporal
  networks. In: Computing, Communication and Networking Technologies (ICCCNT),
  2017 8th International Conference on, IEEE, pp 1--6

\bibitem[{Tang et~al.(2017)Tang, Tang, and Yuan}]{tang2017profit}
Tang J, Tang X, Yuan J (2017) Profit maximization for viral marketing in online
  social networks: Algorithms and analysis. IEEE Transactions on Knowledge and
  Data Engineering

\bibitem[{Tang et~al.(2018)Tang, Tang, and Yuan}]{tang2018efficient}
Tang J, Tang X, Yuan J (2018) An efficient and effective hop-based approach for
  influence maximization in social networks. Social Network Analysis and Mining
  8(1):10

\bibitem[{Tang et~al.(2020)Tang, Zhang, Wang, Zhao, Fan, and
  Liu}]{tang2020discrete}
Tang J, Zhang R, Wang P, Zhao Z, Fan L, Liu X (2020) A discrete shuffled
  frog-leaping algorithm to identify influential nodes for influence
  maximization in social networks. Knowledge-Based Systems 187:104833

\bibitem[{Tang et~al.(2014)Tang, Xiao, and Shi}]{tang2014influence}
Tang Y, Xiao X, Shi Y (2014) Influence maximization: Near-optimal time
  complexity meets practical efficiency. In: Proceedings of the 2014 ACM SIGMOD
  international conference on Management of data, ACM, pp 75--86

\bibitem[{Wang et~al.(2012)Wang, Chen, and Wang}]{wang2012scalable}
Wang C, Chen W, Wang Y (2012) Scalable influence maximization for independent
  cascade model in large-scale social networks. Data Mining and Knowledge
  Discovery 25(3):545--576

\bibitem[{Xu et~al.(2016)Xu, Liang, Lin, and Yu}]{xu2016finding}
Xu W, Liang W, Lin X, Yu JX (2016) Finding top-k influential users in social
  networks under the structural diversity model. Information Sciences
  355:110--126

\end{thebibliography}


\end{document}